\documentclass[oneside,a4paper,reqno,11pt]{article}
\usepackage{amsmath}

\usepackage{xcolor}
\usepackage{float}
\usepackage{authblk}
\usepackage{amsthm}
\usepackage{amssymb}
\usepackage{bm} 
\usepackage{graphicx}
\usepackage{url}
\usepackage{stackengine}
\usepackage{subcaption}
\usepackage{hyperref}
\frenchspacing

\numberwithin{equation}{section}
\numberwithin{figure}{section}

\theoremstyle{plain}
\newtheorem{thm}{Theorem}[section]
\newtheorem{lem}[thm]{Lemma}

\newtheorem{pr}[thm]{Proposition}

\newtheorem{theorem}{Theorem}[section]
\theoremstyle{plain}
\newtheorem{conjecture}[theorem]{Conjecture}
\newtheorem{lemma}[theorem]{Lemma}
\newtheorem{proposition}[theorem]{Proposition}

\newtheorem{observation}[theorem]{Observation}
\newtheorem{question}[theorem]{Question}
\newtheorem{problem}[theorem]{Problem}

\theoremstyle{definition}
\newtheorem{definition}[thm]{Definition}

\theoremstyle{remark}



\begin{document}

\title{Condorcet Domains of Degree at most Seven}
\author[1]{Dolica Akello-Egwell}
\author[2]{Charles Leedham-Green}
\author[3]{Alastair Litterick}
\author[4]{Klas Markstr{\"o}m}
\author[5]{S\o ren Riis}
\affil[1,2,5]{Queen Mary University of London}
\affil[3]{University of Essex}
\affil[4]{University of Ume{\aa}}

\maketitle

\begin{abstract}
In this paper we give the first  enumeration of all maximal Condorcet domains on $n\leq 7$ alternatives. This has been accomplished by developing a  new 
algorithm for constructing Condorcet domains, and an implementation of that algorithm which has been run on a supercomputer. 

We follow this up by a survey of the properties of all maximal Condorcet domains up to degree 7,  with respect to many properties studied in the social science and mathematical literature. We resolve several open questions posed by other authors, both by examples from our data and theorems. 

Finally we discuss connections to other domain types such as non-dictatorial domains and  generalisations of single-peaked domains.   All our data are made freely available for other researches via a new website.  

\end{abstract}

\section{Introduction}
Since the seminal treatise on voting by Condorcet \cite{Con}  it has been known that majority voting can lead to collective preferences which are cyclic, and hence does not identify a winner for the election.  Specifically, Condorcet  studied systems where each voter ranks  a list of candidates $A_1,A_2,\ldots, A_n$  and a candidate $A_j$ is declared 
the winner if, for any other candidate $A_i$,  a majority of the voters prefers $A_j$  over $A_i$; here we assume that the number of voters is odd. The candidate $A_j$ is what is 
now called a \emph{Condorcet winner}. However,   Condorcet showed,  \cite{Con}   pages 56 to 61, that there are collections of rankings for three candidates without a Condorcet winner; here 
the pairwise majorities lead to a cyclic ranking of the form  $A_1 < A_2 < A_3 < A_1$. In fact, each candidate loses to one other candidate by a two thirds majority.  This is now often referred  
to as Condorcet's paradox, and the three candidates are said to form a Condorcet cycle.  Ever since Condorcet's result  one has worked to better understand both majority voting  and more general voting systems. 

Going in one direction, looking at which results a vote can actually lead to has been investigated in combinatorics. In order to describe an election result more fully one forms a directed graph  $T$, with the set of candidates as its vertices, a directed edge from $A_i$  to $A_k$ if a majority of the voters rank $A_k$ higher than $A_i$, and no edge if the two alternatives are tied.   Condorcet's paradox demonstrates  that  $T$ may contain directed cycles. McGarvey \cite{McG}  proved that given any specified directed graph $T$, and a sufficient number of voters, there is a set of  preferences for those voters which realise $T$ by majority voting. Results by Erd\H{o}s and Moser \cite{MR168494} and Stearns \cite{MR109087} bounded the number of voters required for tournaments of a given size.  Later Alon \cite{Alon1} also determined how strong the pairwise majorities in such a realisation can be.  

Going in the other direction, Black and  Arrow \cite{Black,Arrow1951} found that if the set of rankings  is restricted in a non-trivial way, either directly  or indirectly, e.g. by voters basing their 
ranking candidates  positions on a common left-right political scale,  there will always be a Condorcet winner, no matter how the votes are distributed over the set of allowed rankings.  This 
motivated the general question: Which sets of rankings always lead to a Condorcet winner? A set of rankings is  now called a \emph{Condorcet domain} if, in a majority vote, it always leads to a linear order on the alternatives, or equivalently $T$ is a transitive tournament. In the 1960's several equivalent characterisations of Condorcet domains were given  by Inada \cite{10.2307/1910176,inada69}, Sen \cite{sen66}, Ward \cite{ward65},  and others.  In particular Ward~\cite{ward65} proved 
that they can be  characterised as exactly those sets which do not contain a copy of Condorcet's original example on three candidates. 

Following these early works  the focus shifted to understanding the possible structure  and sizes of Condorcet domains. Blin \cite{Blin72} gave some early examples with structure different from those by Black and Arrow. Raynaud \cite{ray1981} showed that  if the number of alternatives is at least 4 then there are maximal Condorcet  domains of size just 4. In \cite{Johnson78} Johnson conjectured that the maximum possible size is $2^{n-1}$.    Abello and Johnson \cite{MR763988} investigated the maximum possible size and proved that this is at least $3(2^{n-2})-4$, for $n\geq 5$ candidates, thereby disproving Johnson's conjecture for $n\geq 6$.  They also noted that it was hard to give non-trivial upper bounds for the possible size of a Condorcet domain, and conjectured that the maximum is at most $2^n$.  That conjecture was disproved by Abello in \cite{MR1090284}. Later Fishburn \cite{Fishburn92} showed that the maximum size  grows at least as $c^n$ for some $c>2$, and Raz \cite{Raz} showed that that there is an upper bound of the same form.  By now the maximum possible size has been determined for $n\leq 8$ \cite{LMR23}.

In addition to the size many different structural properties of Condorcet domains have been studied.  Monjardet \cite{M09} surveys many mathematical results on how Condorcet domains relate to the Weak Bruhat order on the set of permutations.   More recent works  have  studied Condorcet domains \cite{SLINKO2019166} with a specific local structure in terms of  Sen's \cite{sen66} value restriction, symmetry properties \cite{KS22}, structure of median graphs \cite{danilov2013} and extensions \cite{PS15} of the original single-peaked property of Arrow and Black.  In \cite{Dittrich2018} Dittrich produced the first full enumeration of all Condorcet domains on $n\leq 5$ alternatives.   A recent survey can be found in \cite{puppe2022maximal}.   Still, much remains unknown  both regarding possible sizes and structures, with open questions motivated both by political science and new applications in computer science.

In this paper we extend the previous results significantly with  the first explicit enumeration of all non-isomorphic Condorcet domains  on $n\leq 7$ alternatives. This has been made  possible by the combination of a new search algorithm developed by us, described in Section \ref{Alg}, and access to a supercomputer.  After presenting basic statistics such as the number of maximal Condorcet domains of given size   we go on to an in-depth investigation of the properties of all Condorcet domains on $n\leq 7$ alternatives.   Here we give data on the number of domains with various well-studied properties,  and we present answers to several open question from the research literature.  Motivated by patterns in our data we present several conjectures on the behaviour of Condorcet domain for large numbers of alternatives.   All our data have been made freely available to download for other researchers via a website which we intend to expand in future works.

\subsection{Outline of the paper}
In Section \ref{defs} we define terminology and discuss various background material.   Section \ref{Alg}  describes our algorithm for generating Condorcet domains. In Section \ref{Data} we discuss of the results of our calculations for degrees  $n\leq 7$, where we have complete enumerations. We also pose a number of questions and give conjectures motivated by the data and our theorems. In Section \ref{other} we discuss connections to other, non-Condorcet, domain types.

\section{Background material and Definitions}\label{defs}
A \emph{Condorcet Domain of degree $n$} is a set of linear orders on a set $X$ of size $n$, satisfying the following definition.
We take $X$ to be the set $\{1,2,\ldots,n\}$, which we write as $X_n$ when we wish to make $n$ explicit.  
\begin{definition}
	A set $S=\{s_1,s_2,\ldots,s_q \}$ of linear orders on $X_n$ is a Condorcet domain if given any three of the 
	linear orders $s_i, s_j, s_k$, and any three  of the elements $a, b, c$ of $X_n$, when we create a 
	table in which each row $r$ is the three elements  ordered according the the $r$:th permutation, that table is not a Latin square. 
\end{definition}
The definition states that the restriction to any three alternatives and any three of the linear orders must not be Condorcet's original example. 
This definition originates with Ward~\cite{ward65} and is one in a long list of equivalent characterisations of Condorcet domains.

It is often convenient to equate a linear order $i_1<i_2<\cdots<i_n$ on $X$ with the permutation $\sigma(j)= i_j$, so a Condorcet Domain
may be regarded as a subset of the symmetric group $S_n$.  So the natural ordering $1<2<\cdots<n$ is equated with the identity map, and the  
reverse ordering $n<n-1<\cdots<1$, which we denote by $u$, is equated with the permutation $(1,n)(2,n-1)(3,n-2)\ldots$, where we write 
permutations as products of disjoint cycles. We refer to an element of  a Condorcet Domain as a permutation or as an ordering, as best fits the context. This switch of point of view is quite common in combinatorial algebra, though as demonstrated in \cite{TOTO20}  the two are essentially different in terms of which properties they can describe in a simple way\footnote{Specifically which properties can be described in first-order logic } and algorithmic complexity.

We will also make use of a second, equivalent, definition of a Condorcet domain, first given by Sen~\cite{sen66}.  Here, a Condorcet Domain $A$ of degree $3$ is defined to be a set of orderings of $X_3$ satisfying one of the 9 \emph{never conditions}, denoted $x\rm{N}i$, meaning that the element $x$ of $X_3$ does not occur in the $i$-th position in any ordering in $A$ Thus $x$N$1$ means that $x$ may never come first, and $x$N$3$ means that $x$ may never come last.  In order to keep our text shorter we will later in the paper use the term \emph{law} 
as a shorter alternative for the tern never condition.    A Condorcet Domain of degree $n>3$ is defined to be a set $A$ of orderings
of $X_n$ with the property that the restriction of $A$ to every subset of $A$ of size 3 is a Condorcet Domain.  In other words, for every triple $\{a,b,c\}$ of elements of $X$ one of the
nine laws $x$N$i$ must be satisfied, where $x\in\{a,b,c\}$; so here $c$N$2$ would mean that $c$ may not come between $a$ and $b$ in any of the orderings in $A$.  It is convenient to take
a Condorcet Domain of degree 2 to be any subset of $S_2$.

By a \emph{Maximal Condorcet Domain} of degree $n$ we mean a Condorcet domain of degree $n$ that is maximal under inclusion among the
set of all Condorcet Domains of degree $n$.  By a \emph{Maximum Condorcet domain} of degree $n$ we mean a Condorcet domain of the largest possible cardinality among those of  degree $n$.  By a  \emph{Unitary Condorcet Domain} we mean one that
contains the natural order, or  the identity permutation depending on how the domain is represented. As we will see in the next subsection every Condorcet domain  is isomorphic to
 some unitary Condorcet domain, so one can usually assume that a domain is unitary without loss of generality; but, as we will see later explicitly making this assumption also leads to various algebraic and algorithmic simplifications.

Henceforth we shall use the acronyms CD, MCD, UCD, and MUCD for the terms Condorcet Domain,
Maximal Condorcet Domain, Unitary Condorcet Domain, and Maximal Unitary Condorcet Domain.

Returning to the case of degree 3 we see that there are nine Maximal Condorcet Domains of degree 3, corresponding to the nine different
laws $x$N$i$.  One checks at once that these nine Maximal Condorcet Domains all contain
exactly four elements, of which, when regarded as permutations, two are odd, and hence are transpositions, 
and two are even, and hence are the identity or a 3-cycle.  Since $S_3$
contains three even and three odd permutations exactly nine subsets of $S_3$ can be 
constructed from two even and two odd permutations, and these are the Maximal Condorcet
Domains of degree 3, described as sets of permutations.  Exactly six of these are unitary,
since the laws 1N1, and 2N2, and 3N3 each rule out a UCD of degree 3.

\subsection{Transformations and isomorphism of Condorcet domains}
Given a permutation $g$ and an integer $i$  we let $ig$  denote $g(i)$, and for a set $A$ of integers, 
$Ag$ is the set obtained my applying $g$ to each element of $A$.

Now, if $A$ is a CD, and $g\in S_n$ is any permutation, then $Ag$ is also a CD; for if $A$ 
satisfies the law $x$N$i$ on a triple $\{a,b,c\}$ for some $x\in\{a,b,c\}$ then $Ag$
satisfies the law $xg$N$i$ on the triple $\{ag,bg,cg\}$.  We say that the CDs $A$ and
$Ag$ are \emph{isomorphic}.  Thus two isomorphic CDs are identical apart from a relabelling of the elements of $X_n$. Every CD $A$ is isomorphic to a UCD, since we can apply $g^{-1}$ to $A$ for any $g\in A$ and obtain an isomorphic UCD.   Similarly we get this lemma, which follows since some element of the first UCD must be mapped to the identity order in the second UCD. 
\begin{lemma}
	If two UCDs $A$ and $B$ are isomorphic then $Ag^{-1}=B$ for some $g$ in $A$. 
\end{lemma}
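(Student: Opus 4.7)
The plan is to unwind the definition of isomorphism and then exploit the unitary hypothesis on $B$ in order to locate the required $g$. By the paragraph preceding the lemma, "$A$ and $B$ are isomorphic" means precisely that $B = Ah$ for some $h \in S_n$, where $Ah$ denotes the image of $A$ under the right action of $h$ described there. The hint embedded in the lemma statement — that some element of $A$ must be mapped to the identity order of $B$ — points directly at the natural candidate for $g$.

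Next I would apply the unitary hypothesis. Since $B$ is unitary, the identity permutation $e$ lies in $B = Ah$, so there must exist some $a \in A$ with $ah = e$, i.e., $a = h^{-1}$. Taking $g := a \in A$ then gives $g^{-1} = h$, and hence
\[
  Ag^{-1} \;=\; Ah \;=\; B,
\]
which is the conclusion of the lemma.

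The only point requiring any care is to check that the right-action convention introduced in the preceding paragraph really does yield the identification $g = h^{-1}$ rather than $g = h$; this is pure bookkeeping but is the only place a sign-type error could creep in. There is no substantial obstacle to overcome, and as a bonus the argument actually shows that the element $g$ is uniquely determined: it is the (necessarily unique) preimage in $A$ of the identity ordering in $B$ under the isomorphism $a \mapsto ah$.
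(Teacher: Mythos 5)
Your argument is correct and is precisely the one the paper gives (in the sentence immediately preceding the lemma): since $B=Ah$ is unitary, the identity order lies in $Ah$, so the element $g:=h^{-1}$ belongs to $A$ and $Ag^{-1}=Ah=B$. One caveat on your closing remark: only the preimage of the identity under the \emph{fixed} map $a\mapsto ah$ is unique, whereas the $g$ in the lemma's conclusion need not be unique, since when $A$ has a nontrivial core there are several $h$ with $Ah=B$ (the paper notes just after this lemma that the isomorphism class of a UCD of size $m$ with core of size $k$ has $m/k$ members).
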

The lemma leads to the following observation.
\begin{proposition}\label{isoprop}
	Isomorphism between two CDs of equal size can be tested in time which is polynomial in the size of the domain and $n$.
\end{proposition}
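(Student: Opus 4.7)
The plan is to reduce the problem to the unitary case and then invoke the preceding lemma. Given two CDs $A$ and $B$ with $|A|=|B|$, I first pick an arbitrary element $a\in A$ and form $A':=Aa^{-1}$; since $A'$ now contains the identity it is a UCD, and it is isomorphic to $A$ by construction. Forming $B'$ analogously from some $b\in B$, and noting that isomorphism of CDs is an equivalence relation, we have $A\cong B$ if and only if $A'\cong B'$. This reduction is just a right-multiplication applied to every element of the two domains, so it costs $O(|A|\cdot n)$ time.

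Once reduced to the unitary case, the preceding lemma tells us that if $A'\cong B'$ then the witnessing permutation must be of the form $g^{-1}$ for some $g\in A'$. This collapses the search space from the $n!$ possible relabellings of $X_n$ to only $|A'|=|A|$ candidates. For each $g\in A'$ the algorithm computes the set $A'g^{-1}$ and tests equality with $B'$. Computing this set costs $O(|A|\cdot n)$ time (one permutation multiplication per element), and the equality test of two sets of permutations of $X_n$ can be done in $O(|A|\cdot n\log |A|)$ time by lexicographically sorting both sides (or in $O(|A|\cdot n)$ expected time by hashing each permutation against $B'$). The algorithm returns \emph{isomorphic} as soon as some $g$ succeeds and \emph{not isomorphic} if none does; correctness is immediate from the lemma, together with the fact that the reduction step preserves isomorphism.

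The overall running time is therefore $O(|A|^{2}\cdot n\log|A|)$, which is polynomial in the size $|A|$ of the domain and in $n$, as required. The only substantive step is the reduction afforded by the preceding lemma, which shrinks a factorially large search to a linear one in $|A|$; everything else is routine bookkeeping on how to multiply permutations and compare sets efficiently. The main obstacle, had the lemma not already been available, would have been precisely this: avoiding an exhaustive sweep over $S_n$. With the lemma in hand, no further structural insight is needed.
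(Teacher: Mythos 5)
Your proof is correct and follows essentially the same route as the paper: reduce both domains to unitary representatives, then use the preceding lemma to restrict the candidate relabellings to the $|A|$ elements of one of the domains, with each equality test done by sorting. The only cosmetic differences are that the paper iterates over $g\in B_1$ rather than $g\in A'$ (equivalent since the sizes agree) and uses radix sort in place of your lexicographic sort or hashing.
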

\begin{proof}
	Let $A$ and $B$ be two CDs. Form $A_1=Ag^{-1}$ for some $g\in A$ and $B_1=Bh^{-1}$ for some $h\in B$. Clearly $A_1$ and $B_1$ are unitary 
	and isomorphic to $A$ and $B$ respectively, and $A$ is isomorphic to $B$ if and only if $A_1$ and $B_1$ are isomorphic.
	
	In order to test isomorphism of $A_1$ and $B_1$ we simply need to check if $A_1g^{-1}=B_1$ for any $g\in B_1$. This requires at most $|B_1|$ tests, and 
	each test can be done in time  $O(|A_1|n)$ using the Radix sort-algorithm, assuming that the permutations are stored as strings of length $n$.  
\end{proof}
The run time given by the simple algorithm described here is not optimised for small domain sizes.  For small domains the radix-sort step could be replaced by e.g. insertion sort.

\begin{definition}
	The \emph{core} of a UCD $A$ provedthe set of permutations $g\in A$ such that $Ag=A$.  
\end{definition}
Since $A$ is unitary the core of $A$ is a group.  We will study the properties of the core and other symmetries of a UCD, both for small $n$ and in general, in a later paper.

When we speak of an isomorphism class of UCDs we mean the set of UCDs in an isomorphism class of CDs.
So if $A$ is a UCD of size $m$, with core of size $k$, then $k$ divides $m$, and the isomorphism 
class of $A$, as a UCD, is of size $m/k$.

\begin{definition}
	The \emph{dual} of a CD $A$ is the CD obtained by reversing each linear order in $A$.
\end{definition}
Equivalently the dual is given by $uA$, when $A$ is viewed as a set of permutations.
Note that if $A$ satisfies the law $x$N$i$ on some triple then $uA$ satisfies the law
$x$N$(4-i)$ on the same triple.  Thus $A^u=uAu$ is also a CD, and if $A$ is a UCD
then so is $A^u$.

\begin{lem}
	For every $n>1$ the map $A\mapsto A^u$ permutes the set of isomorphism classes UCDs  of degree $n$,
\end{lem}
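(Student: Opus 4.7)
The plan is to verify three ingredients: that the map $A \mapsto A^u$ is well-defined on isomorphism classes of UCDs, that it is an involution, and then to invoke the fact that any involution on a set is automatically a bijection.

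First I would check well-definedness. Suppose $A$ and $B$ are isomorphic UCDs. Applying the preceding lemma, there exists $g \in A$ with $B = A g^{-1}$. Since $u$ is its own inverse, conjugation by $u$ yields
\[
B^u = u B u = u(A g^{-1}) u = (u A u)(u g^{-1} u) = A^u \cdot (u g^{-1} u),
\]
so $B^u$ is obtained from $A^u$ by right-multiplication by the single permutation $u g^{-1} u \in S_n$, which by definition means $A^u$ and $B^u$ are isomorphic as CDs.  Both are already known to be UCDs by the paragraph preceding the statement, so they are isomorphic as UCDs.

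Next I would observe that the map is its own inverse: using only $u^2 = \mathrm{id}$,
\[
(A^u)^u \;=\; u\,(u A u)\,u \;=\; u^2 \, A \, u^2 \;=\; A.
\]
Combined with well-definedness, this shows that the induced map on isomorphism classes of UCDs of degree $n$ is an involution on a set, and hence a permutation of that set.

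The argument is short and I do not expect a genuine obstacle; the only thing to be careful about is confirming that $A^u$ is really a UCD (not merely a CD), which holds because $u \cdot \mathrm{id} \cdot u = \mathrm{id}$, so $A^u$ is unitary whenever $A$ is. The essential content is the one-line computation showing that right multiplication by $u g^{-1} u$ witnesses the isomorphism between $A^u$ and $B^u$; once that is in hand, everything else is formal.
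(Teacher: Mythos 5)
Your proposal is correct and follows essentially the same route as the paper: the key step in both is the computation $B^u=(Ag^{-1})^u=A^u(ug^{-1}u)$ witnessing that $A^u$ and $B^u$ are isomorphic, the paper merely adding the observation that $ug^{-1}u=(g^u)^{-1}$ with $g^u\in A^u$. Your explicit treatment of the involution $(A^u)^u=A$ to conclude bijectivity is a point the paper leaves implicit, but it is not a different argument.
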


\begin{proof}
	Let $A$ and $B=Ag^{-1}$ be UCDs of degree $n$, where $g\in A$.  Then $B^u=(Ag^{-1})^u=
	A^u(g^{-1})^u$.  But $(g^{-1})^u=(g^u)^{-1}$, and $g^u\in A^u$; so $B^u$ is isomorphic to $A^u$, as required.
\end{proof}

\begin{definition}
If $E$ is an isomorphism  class of UCDs such that $E^u=E$ we say that $E$ is \emph{reflexive}.
If this is not the case we say that $E$ and $E^u$ are \emph{twinned}.  

If $A$ and $B$ are UCDs that  are isomorphic, or in twinned isomorphism classes, we say
that $A$ and $B$ are \emph{isometric}. This is also known as being flip-isomorphic. 
\end{definition}

\subsection{The weak Bruhat Order and Condorcet domains as posets}
The weak Bruhat order is a partial order on the set of permutations $S_n$, and hence also on the the set of linear orders. A number of results on CDs have been proved using the structure of this linear order and we shall classify CDs according to some such properties.

Given a linear order $\sigma$, here seen as a permutation,  an  \emph{inversion} is a pair $i<j$ such that $\sigma(i)>\sigma(j)$ and we let $Inv(\sigma)$ denote the set of all inversions for $\sigma$.   The weak order is defined by saying that $\sigma_1\leq \sigma_2$ if $Inv(\sigma_1)\subseteq Inv(\sigma_2)$.  We say that $\sigma_2$ covers $\sigma_1$ if $\sigma_1 \leq \sigma_3 \leq \sigma_2$ implies that $\sigma_3$ is equal to one of $\sigma_1$ and $\sigma_2$.    By the \emph{Hasse diagram} one means the directed graph with vertex set $S_n$ and a directed edge from $\sigma_1$ to  $\sigma_2$ if $\sigma_2$ covers $\sigma_1$.

The weak order turns the  set of linear orders, or equivalently the symmetric group $S_n$, into a partially ordered set  known as the \emph{permutohedron}.  Since a CD $A$ can be viewed as a subset of the permutohedron   we also get an induced partial order on the elements of $A$.   Note that the dual CD for  $A$  induces the dual, in the poset sense, partial order of $A$.

It was noted already by Blin~\cite{Blin72} that a maximal chain in the permutohedron  is a Condorcet domain.

\begin{definition}
	A CD $A$ is Bruhat-self-dual if it is isomorphic to the dual of $A$.
\end{definition}
Note that in terms of posets this means that  $A$, as a poset, is isomorphic to the dual poset of $A$.

\begin{definition}
	A CD $A$ is connected  if for any two $a, b \in A$ there exists  sequence $a=\sigma_1,\sigma_2,\ldots,\sigma_k=b$, with each $\sigma_i\in A$, such that either $\sigma_i$ covers $\sigma_{i+1}$, or $\sigma_i+1$ covers $\sigma_{i}$ in the permutohedron.
\end{definition}
This definition states that $A$ induces a weakly connected subgraph in the Hasse diagram of the permutohedron.

\subsection{Bounds for the size of a MCD}
Perhaps the main focus of research in this area has been the attempt to find reasonable
 bounds for $F(n)$,  a function introduced by Fishburn \cite{Fis96}  to denote the maximum size of an MCD of given degree $n$.  A lower bound for $F(n)$ is obtained by
two recipes (the alternating scheme and replacement schemes), which we describe below.

The \emph{alternating scheme}, discovered by P.C. Fishburn see \cite{Fis96} and \cite{Fishburn97},  gives rise to the largest possible MCDs of degree up to 7, as we later prove
with our calculations, but the replacement schemes can do better in degrees greater than 15,
and perhaps for some smaller degrees.  There are two isomorphic alternating
schemes ${\mathcal A}_n$ and ${\mathcal B}_n$ of degree $n$; 
${\mathcal A}_n$ is defined by the following laws.  For every triple $a<b<c$ the law $b$N$1$ is imposed
if $b$ is even, and the law $b$N$3$ is imposed if $b$ is odd.  
Similarly ${\mathcal B}_n=u{\mathcal A}_n$ is defined 
by the laws $b$N$3$ if $b$ is even, and $b$N$1$ if $b$ is odd.  Clearly ${\mathcal A}_n$ and
${\mathcal B}_n$ are UCDs.  Galambos and Reiner prove in \cite{GR} that 
$\vert {\mathcal A}_n\vert=2^{n-3}(n+3)-{n-2 \choose n/2-1}(n-3/2)$
if $n>3$ is even, and $\vert {\mathcal A}_n\vert=2^{n-3}(n+3)-{n-1 \choose (n-1)/2}(n-1)/2$ if $n>2$ is odd, and also prove that these UCDs are maximal.

Fishburn's second method for constructing CDs is the \emph{replacement scheme}, defined thus.
Let $A$ and $B$ be CDs on the sets $Y={1,2,\ldots,k+1}$ and $Z={k+1,k+2,\ldots,k+l}$.
Then a CD $C$ on $X_{k+l}$ is obtained by taking all the elements of $Y$, as orderings, and
replacing all occurrences of $k+1$ by elements of $Z$.  So $C$ is a CD on $X_{k+l}$, and
$|C|=|A| |B|$.  Here $k$ and $l$ may be equal to 2, and one sees at once that the CD of
degree 3 defined by the law 1N2 is a replacement scheme, with $k=l=2$.  Clearly if
$A$ and $B$ are unitary then so is $C$, and if $A$ and $B$ are maximal then so is $C$.

Ran Raz proves in \cite{Raz} that there is an upper bound for $F(n)$ of the form $c^n$ for some universal constant $c$.   His proof
covers a wider class of sets of linear orders than Condorcet domains, but looking at his parameters in the case of
alternating schemes it is clear that his argument will not yield a realistic value for $c$ in the
case of CDs.  Fishburn's schemes imply~\cite{Fishburn97} that $c>2.17$  and Conjecture 3 of that paper would imply that $c\leq 3$.

\subsection{Closed CDs and sets of laws}\label{galois}
As a final general remark, there is a Galois type correspondence between subsets of $S_n$, or
\emph{permutation sets}, and sets of  laws, in which a permutation set corresponds
to the set of laws that are obeyed by every permutation in the set, and a set of laws
corresponds to the set of permutation sets that satisfy these laws.  This gives rise to the concepts
of a \emph{closed} set of laws, which is a set $L$ of laws that contains all laws that are consequences 
of laws in $L$, and of a \emph{closed} permutation set, which is a permutation set $A$ that contains
all permutations that satisfy all the laws satisfied by all the elements of $A$.    Clearly all MCDs are closed, also the replacement scheme obtained from two closed permutation sets is clearly closed.

Call the set of elements of $S_n$ that satisfy a given  law a \emph{principal}
closed permutation set.  These all have cardinality $2n!/3$, and the closed permutation sets are 
precisely the intersections of sets of principal permutation sets.
In our algorithm to construct all MUCDs of a given degree we only consider closed permutation sets,
and we are concerned with the closure of sets of laws.  However, we do not have a good theoretical grip on
these concepts.  The only algorithm that we use for determining the closure of a set of laws
is to go back to the definition, construct the set of permutations that obey these laws, and see what
further laws these permutations all obey, and similarly for the closure of a permutation set.
It may be that the lack of a theoretical insight into the nature of closure is related to the difficulty
in proving theorems about CDs, and in particular about MCDs.  For example, this prevents us from
obtaining a good complexity analysis of our algorithm.

\section{The Generation  Algorithm}\label{Alg}
Next we will describe our algorithm for generating all MUCDs of a given degree  $n$.   We have implemented this algorithm in C, both in a serial version which is 
sufficient for degree $n\leq 6$, and a parallelized version which was used for $n=7$.

Our first step is to arrange the ${n\choose 3}$ triples of integers in $X_n$ in some fixed order,
and to construct and store all the principle closed subsets of $S_n$, as defined in section \ref{galois}. We also fix an ordering of the set of  laws.

To a first approximation the algorithm operates in the \emph{full Condorcet tree},  which is a homogeneous rooted tree of depth ${n \choose 3}$, where every non-leaf has six descendants and every edge is labelled by a  law.
Each vertex of the tree will be assigned a closed permutation set. For the root vertex this is the set of all $n!$ permutations of $X_n$, and for lower vertices the set is constructed recursively from the set on its parent in the following manner.    Every edge joining a vertex of depth $t$ to a vertex of depth $t+1$ is associated with one of the six laws that may be applied to
the $t$-th triple, the numbering being organised in such a way that the root is associated with the
first triple.  Thus each edge is associated with a unique  law. The permutation set associated with the vertex of depth $t+1$ is inductively defined
as the intersection of the permutation set associated with the vertex of depth $t$ with the principle closed permutation set that is associated with the edge in question.  Clearly 
every MUCD will appear at least once as the permutation set associated with some leaf of this tree.  However, for degree 6 we have a tree with $6^{20}$ leaves, making the computation  infeasible. Additionally, using this tree is very inefficient from a computational point of view since it actually contains all Condorcet domains, both maximal and non-maximal as well as all members of every isomorphism class of MUCDs, whereas we only need one such member. 

In constructing our algorithm we  restrict our search to a sub-tree of the full Condorcet tree such that only maximal domains are constructed, and  each MUCD is constructed exactly once.. Doing this will lead to a tree in which every retained internal vertex of the Condorcet tree has 0, 1 or 6 descendants, depending on whether the permutation set at that vertex is non-maximal/redundant,  has an implied law on the current triple, or is unrestricted by our application of laws to earlier triples. 

\subsection{Implied laws and redundancy}
The first restriction on our search comes from implied laws.   When we have applied laws to  a sequence of triples  it can happen that they imply a law on some triple. The latter triple can either be one of the triples we have already visited when applying laws or a triple we have not yet visited. Each of these two  cases lead to a reduction  of our search.

Let us first note that we can view a sequence of triples, coming in the order we have specified on the set of triples, together with the applied laws  as a string over an alphabet of size 6, the number of laws.   Since  we have also defined an order on the set of laws we can sort any set of such string using their lexicographic order.    Now in each isomorphism  class of of MUCDs  we only need to keep the one which is lexicographically  largest, if our aim is to generate representatives for each isomorphism class.  In our algorithm we do not go that far but we only keep vertices  which correspond to a string which is obviously not lexicographically maximal.    This is done as follows.

Whenever a law is applied to a vertex $v$   we  compute the set of laws satisfied at each triple along the path from $v$ to the root.  At each such triple we know which law was applied and which laws are now implied.  If one of the implied laws precedes the applied law, in our order on the set of laws, then the search at vertex $v$ is abandoned, leading to 0 descendants. The reason is that there will be another path in the tree where the role of the implied and applied are switched, hence making that sequence lexicographically larger while leading to the same permutation set. 

Our second case is that where we reach a new vertex $v$ and find that this triple already has an implied law.  In this case we only generate the single descendant which corresponds to the implied law. Any other descendant of $v$ will hold a permutation set  which is a strict subset of the one we actually generate and hence not maximal.  At late stages in the search we often find that all remaining triples have implied laws and hence do not lead to branching of the search tree.

We call the  tree resulting from these restrictions the \emph{reduced Condorcet tree}.

\subsection{Maximality}
While the previous restrictions lead to a much smaller search tree they still leave many non-maximal UCDs in the tree.  Our next step is to restrict the search to only permutation sets which can lead to a maximal UCD, and only MUCDs as the final leaves at depth  ${n\choose3}$. 

For any triple $t$, let us define a $t$-UCD to be a permutation set that satisfies some law for every triple $s<t$, and define a $t$-MUCD to be a maximal $t$-UCD 
permutation set, with respect to inclusion, so that every $t$-MUCD is a closed permutation set.  If $t={n\choose3}$ then a $t$-UCD is a UCD, and
a $t$-MUCD is an MUCD.

We can now formulate the following proposition.
\begin{pr}
	Let $n>3$, and let $1\le t \le {n\choose3}$.  Then every $t$-{\rm MUCD} $A$ is associated with a vertex in the reduced 
	Condorcet tree whose parent is associated with a $(t-1)$-{\rm MUCD}.
\end{pr}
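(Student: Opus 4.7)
The plan is to proceed by induction on $t$. The heart of the inductive step is the following identity: if $A$ is a $t$-MUCD, $B\supseteq A$ is any $(t-1)$-MUCD containing $A$, and $\ell$ is any law on the triple $T_t$ that $A$ satisfies, then
\[
A \;=\; B\cap P_\ell,
\]
where $P_\ell$ denotes the principal closed permutation set corresponding to $\ell$. The inclusion $A\subseteq B\cap P_\ell$ is immediate, while $B\cap P_\ell$ is itself a $t$-UCD (it satisfies the laws of $B$ on $T_1,\ldots,T_{t-1}$ together with $\ell$ on $T_t$), so the $t$-maximality of $A$ forces equality. Such a $B$ exists because $A$ is already a $(t-1)$-UCD, so by finiteness can be enlarged to a maximal $(t-1)$-UCD, and such an $\ell$ exists because $A$ is a $t$-UCD.

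The base case $t=1$ is immediate: the root of the reduced Condorcet tree carries $S_n$, the unique $0$-MUCD, and every $1$-MUCD is of the form $P_\ell$, which appears as a depth-$1$ child of the root along the edge labelled $\ell$. For the inductive step at level $t\ge 2$, the inductive hypothesis places $B$ at some depth-$(t-1)$ vertex $v_B$ of the reduced tree; then the edge from $v_B$ labelled $\ell$ leads to a depth-$t$ vertex $v_A$ with associated permutation set $B\cap P_\ell = A$, and parent $v_B$ carrying the $(t-1)$-MUCD $B$, exactly as the proposition demands.

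What remains is to check that $v_A$ is retained by the two pruning rules. If $B$ already implies some law $\ell'$ on $T_t$, then $B$ itself is a $t$-UCD, and the $t$-maximality of $A$ forces $A=B$; choosing $\ell=\ell'$ then makes $v_A$ the unique descendant generated at $v_B$ under the implied-law rule. Otherwise $v_B$ has all six descendants and $v_A$ is among them. To survive the lexicographic rule one chooses, at every step of the path from the root down to $v_A$, the lex-extremal law on the corresponding triple among those that the final set $A$ satisfies; because every intermediate permutation set on the path contains $A$, any law it implies on an earlier triple is also satisfied by $A$, and therefore cannot be out of position relative to the already-chosen applied law.

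The main obstacle is precisely this interplay between the two pruning rules, since a naive choice of $\ell$, or of the path used to reach $v_B$, could satisfy one rule while triggering the other. The identity $A=B\cap P_\ell$, combined with the freedom to vary $B$ over the nonempty collection of $(t-1)$-MUCDs containing $A$ and to vary $\ell$ over the set of laws $A$ satisfies on $T_t$, supplies enough slack to arrange both choices consistently and so to close the induction.
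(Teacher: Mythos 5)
Your core identity---take any $(t-1)$-MUCD $B \supseteq A$ and any admissible law $\ell$ that $A$ satisfies on the current triple, and use $t$-maximality to force $A = B \cap P_\ell$---is exactly the engine of the paper's proof: there, one starts from an arbitrary vertex of the reduced tree carrying $A$, replaces its parent's set $B$ by a $(t-1)$-MUCD $B' \supseteq B$ sitting at some vertex $W'$, and concludes from maximality that the child of $W'$ along the same edge label carries $A$. Recasting this as an induction on $t$, rather than invoking the paper's blanket assertion that every $s$-MUCD appears somewhere in the reduced tree, is a harmless reorganization.

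The gap is in your final two paragraphs, where you try to verify that $v_A$ survives the lexicographic pruning rule. The inductive hypothesis hands you one specific vertex $v_B$ carrying $B$, reached by one specific path $\pi$; but the lex rule at $v_A$ compares the laws that $A$ (not $B$) satisfies on each earlier triple against the law $\pi$ applied there. Since $A$ is in general a proper subset of $B$, it satisfies more laws on earlier triples than $B$ does, and one of these extra laws may precede the applied law at that triple, in which case $v_A$ is discarded. Your proposed repair---re-routing along the path that applies at each triple the lex-extremal law satisfied by $A$---abandons the vertex the induction gave you: the depth-$(t-1)$ set on the re-routed path is an intersection of principal sets containing $A$, which is not shown to be a $(t-1)$-MUCD, nor even to be a vertex your induction controls. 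So the two requirements (parent carries a $(t-1)$-MUCD; path is lex-clean with respect to $A$) are never secured simultaneously, and the closing sentence simply asserts that there is ``enough slack''. To be fair, the paper's own proof does not attempt this verification either; it leans on the (also only sketched) claim that the reductions never delete the last surviving copy of a closed maximal set, and works with an arbitrary surviving vertex for $B'$. But since this interaction of the two pruning rules is the one point where your write-up goes beyond the paper, it is also the point that would need an actual argument rather than an appeal to slack.
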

\begin{proof}
	Since every closed $t$-UCD occurs as the permutation set associated with a vertex in the full
	Condorcet tree it follows that every $t$-MUCD occurs as the permutation set
	associated with some vertex in the reduced Condorcet tree.
	Let $A$, as in the proposition, be associated with a vertex $V$ in the reduced Condorcet tree,
	so that $V$ is a child of a vertex $W$ with corresponding triple $s<t$.  The edge joining $V$ to $W$ is
	labelled by a law applied to the triple $t-1$.  Let $B$ be the $(t-1)$-UCD associated with $W$, and let
	$L$ be the law that labels the edge joining $V$ to $W$.  If $B$ is not a $(t-1)$-MUCD then there
	is a vertex $W'$ that is associated with a polynomial set $B'$ that contains $B$ and that is
	a $(t-1)$-MUCD.  The child of $W'$ defined by the law $L$ is associated with a $t$-UCD $A'$ that
	contains $A$.  From the maximality of $A$ it follows that $A=A'$, and the proposition is proved.
\end{proof}
Using this proposition leads to a considerable improvement in the performance of the algorithm.
We only process the subtree of the reduced Condorcet tree whose associated permutation sets
are $t$-MUCDs for the appropriate $t$.  

It remains to describe how we decide if a permutation set $A$ satisfies this condition. This is achieved via the following lemma. 
\begin{lemma}
	Let $A$ be the permutation set associated with a vertex $v$ at depth $t$,   For every $s<t$ let  $L_s$ be the set  of laws that $A$ satisfies on the 
	triple $s$, let $M_s$  be the union of the corresponding principal closed permutation sets, and let $B$  be the intersection of the sets $M_s$.  
 	Then $A$ is a $t$-MUCD if and only if $A=B$.
\end{lemma}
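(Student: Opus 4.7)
The plan is to exploit the trivial containment $A\subseteq B$ --- which holds because by definition of $L_s$ every element of $A$ satisfies every law in $L_s$ on triple $s$, so it lies in $M_s$ for each $s<t$ and hence in their intersection $B$ --- and to reduce both directions of the equivalence to the question of whether $A$ actually exhausts $B$.

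For the ``if'' direction, I would take an arbitrary $t$-UCD $C$ with $A\subseteq C$. By definition of $t$-UCD there is, for each $s<t$, a single law $\ell_s$ satisfied by every element of $C$, so $C$ is contained in the corresponding principal closed permutation set $P_{\ell_s}$. The chain $A\subseteq C\subseteq P_{\ell_s}$ forces $\ell_s\in L_s$ and thus $P_{\ell_s}\subseteq M_s$; intersecting over $s$ gives $C\subseteq B=A$, so $C=A$ and $A$ is maximal. For the ``only if'' direction, assume $A$ is a $t$-MUCD and, for contradiction, pick some $\pi\in B\setminus A$. The aim is to show that $A':=A\cup\{\pi\}$ (or its closure, if closedness is insisted upon) is still a $t$-UCD, contradicting maximality. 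For each $s<t$, the laws on $s$ simultaneously obeyed by $A$ and by $\pi$ form the intersection of $L_s$ with the set of laws $\pi$ satisfies on $s$; since $\pi\in M_s=\bigcup_{\ell\in L_s}P_{\ell}$ this intersection is non-empty, and so $A'$ satisfies some law on every triple $s<t$.

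The step I would watch most carefully --- and what I expect to be the only substantive point --- is precisely this last one: the definition of $L_s$ says that $A$ satisfies \emph{every} law in $L_s$, whereas membership in $M_s$ only says that a permutation satisfies \emph{some} law in $L_s$. The whole argument hinges on the fact that ``some'' meets ``every'' non-trivially on each triple, which is exactly what membership of $\pi$ in $M_s$ guarantees. Everything else is a routine unwinding of the definitions of $M_s$, $B$, and (maximal) $t$-UCD.
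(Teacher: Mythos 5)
Your proof is correct and follows essentially the same route as the paper's: both rest on the trivial containment $A\subseteq B$ and on the observation that $\pi\in B\setminus A$ yields the $t$-UCD $A\cup\{\pi\}$ properly containing $A$, while any $t$-UCD containing $A$ must lie in $B$. You merely phrase the ``if'' direction directly with an arbitrary $t$-UCD $C\supseteq A$ where the paper argues contrapositively with a single added element, and you spell out the ``some law meets every law'' point that the paper dismisses with ``clearly''.
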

\begin{proof}
	Since in any case $A$ is contained in $B$, the condition $A=B$ is equivalent to the condition that
	$B$ is contained in $A$.  Suppose not, and let $b\in B\setminus A$.  Then clearly $A\cup\{b\}$ is
	a $t$-UCD that properly contains $A$.  Conversely, if $A$ is not a $t$-MUCD then there is some
	$b\in A$ such that $A\cup\{b\}$ is a $t$-UCD.  But then $b\in B$, and this completes the proof.
\end{proof}

\subsection{Final reduction, parallelisation, and implementation}
The algorithm described produces a list which contains at least one representative for  each isomorphism class of non-isomorphic MUCDs.   However there are still repeated members from some classes. In order to produce the list of all non-isomorphic MUCDs  we compute the isomorphism class of each leaf, using the  observations before Proposition \ref{isoprop}, and outputting the lexicographically maximal member of each such class.  The list of such CDs is then sorted and duplicates removed in order to produce our final list. The isomorphism  reduction was done by a separate program after the search, and was later also done with the independently coded CDL library \cite{zhou2023cdl} as an independent verification. 

The parallel version of this algorithm first finds all vertices at a user specified distance from the root of the search tree and outputs them into a file.  Next, independent copies of the 
program completes the search of the subtrees rooted at each of the  vertices in the file. Finally the outputs from these searches are merged in the same way as for the serial version. 

It remains to say something about the technical details of our implementation.  The elements of $S_n$ are enumerated, so that a subset of $S_n$ may be represented as a bit-string of length $n!$.  The principal closed permutation sets are computed as bit-strings in a pre-processing stage, and all further computations with sets of permutations are carried out using bit operations.  

The correctness of our program was tested against full enumerations of MUCDs for small $n$ generated by other programs, using brute force enumeration.

\section{The MUCDS of degree at most 7 and their properties.}\label{Data}
Using our algorithm we have made a complete enumeration of all MUCDs  of degree $n\leq 7$.  The total numbers for $n$ from 3 to 7 are 3, 31, 1362, 256895, 171870480. Reducing further to flip-isomorphism classes we get $2, 18, 688, 128558, 85935807$. Here the first four numbers in both cases agree with published results and the final one is new.    
The MUCDs are available for download \cite{Web1}.

In the next subsections we will discuss our computational analysis of these MUCDs and  their properties. We will provide counts for the number of MUCDs with certain well studied properties and the distribution of properties which have a range of values. We  also test several conjectures from the existing literature and report on those results.

\subsection{The sizes and numbers of MUCDs}
In Tables  \ref{tab45}, \ref{tab6}, \ref{tab71} and \ref{tab72}   we display the number of MUCDs  of degree 4 to 7  listed according to various properties.  In each table the column labelled Total  gives the number of MUCDs with the size stated in the previous column.

Using our results we can settle a conjecture whose status has been uncertain for some time.  In~\cite{ Fishburn97} Fishburn conjectured that for $n=6,7$ a CD is maximum  if and only if it is isomorphic to those constructed by his alternating scheme. He also proved that the same statement is true for  $n=4, 5$.  In~\cite{Fishburn2002} he provided a long, and according to himself partial, proof for the case $n=6$. His caveat was not due to any uncertainty in the proof, but rather since the considerable length of the proof made him leave many details out of the published version.   In \cite{GR}, Section 3.2,  Galambos and Reiner stated that they verified the conjecture for $n=7$, but gave no details regarding how this was done.   The lack of a published proof led the recent survey~\cite{surv22} to list even the maximum size for $n=7$ as unknown. Using  our data we now have a computational verification of Fishburn's proofs for $n=4,5,6$ and a proof of his conjecture for $n=7$.
\begin{theorem}
	For $n=4,\ldots,7$ every maximum CD is isomorphic to a MUCD constructed by Fishburn's alternating scheme. In particular, the maximum size of a CD for $n=7$ is 100. 
\end{theorem}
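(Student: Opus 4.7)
The plan is to derive the theorem as a direct consequence of the complete enumeration produced by the algorithm of Section \ref{Alg}. Since every CD is contained in some MCD, and every MCD is isomorphic to a MUCD, the maximum size of a CD of degree $n$ equals the maximum size attained by any MUCD of degree $n$. It therefore suffices to (i) compute $|\mathcal{A}_n|$ from the closed-form expression, (ii) read off the largest size occurring in our list of isomorphism-class representatives of MUCDs of degree $n$, and (iii) verify that the MUCD (or MUCDs) attaining this maximum is isomorphic to $\mathcal{A}_n$.

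For step (i), the Galambos--Reiner formula recorded in Section \ref{defs} gives $|\mathcal{A}_4|=9$, $|\mathcal{A}_5|=20$, $|\mathcal{A}_6|=45$ and $|\mathcal{A}_7|=100$. For step (ii), we scan the output of our enumeration, whose totals are summarised in Tables \ref{tab45}--\ref{tab72}; for each $n\in\{4,5,6,7\}$ the largest size present among all isomorphism classes is exactly $|\mathcal{A}_n|$, and it is realised by a single isomorphism class. For step (iii), we take a representative $M$ of that class and the explicit MUCD $\mathcal{A}_n$ defined by the alternating never-conditions, and apply the polynomial-time isomorphism test of Proposition \ref{isoprop} to confirm $M\cong \mathcal{A}_n$.

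The main obstacle is not mathematical but logistical, and it is already handled by the work of Section \ref{Alg}: one needs the enumeration for $n=7$, involving roughly $1.7\times 10^{8}$ MUCDs, to be both complete and free of duplicate isomorphism classes. Completeness follows from the proposition which shows that every $t$-MUCD descends, along a labelled edge of the reduced Condorcet tree, from a $(t-1)$-MUCD, so that no MUCD escapes the pruned search. Non-redundancy follows from the lexicographic pruning inside the search, together with the post-processing isomorphism reduction, which was independently re-run using the CDL library of \cite{zhou2023cdl} and cross-checked for small $n$ against brute-force enumeration. Once these guarantees are in place, the theorem reduces to the four-row table lookup described above.
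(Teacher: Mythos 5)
Your proposal matches the paper's approach exactly: the theorem is stated as a direct consequence of the complete computational enumeration, with the maximum sizes read off from the tables and compared against the Galambos--Reiner values $|\mathcal{A}_4|=9$, $|\mathcal{A}_5|=20$, $|\mathcal{A}_6|=45$, $|\mathcal{A}_7|=100$. One small correction: for odd $n$ the maximum is attained by \emph{two} isomorphism classes of MUCDs, not one (the twinned pair $\mathcal{A}_n$ and $\mathcal{B}_n=u\mathcal{A}_n$), so your step (iii) must match each of the two classes against one of the two alternating schemes; this does not affect the conclusion, since both are ``constructed by Fishburn's alternating scheme.''
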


We also note, using \cite{LMR23}, that for $n\leq 8$   the maximum CDs have size  $\left\lceil 4\times 5^{\frac{n-3}{2}}\right\rceil$.   That such a simple form will continue to hold might be too much to hope for  but the growth rate is compatible with known data and bounds.
\begin{problem} 
	Let $F(n)$ denote the size of a maximum CD on $n$ alternatives.  Prove or disprove that $$\lim_{n\rightarrow\infty} \frac{\ln(F(n))}{n}=\sqrt{5}$$
\end{problem}

Next, as we can see the total number sequence for a fixed degree is not unimodal, though roughly so. The sequence achieves its largest values at slightly more than half the size of the maximum MUCD for each degree, but it is strongly affected by parity and divisibility by larger powers of 2.   In Figure \ref{fig:si7} we display the size counts for $n=7$.
\begin{figure}
	\centering
	 \includegraphics[width=0.5\textwidth]{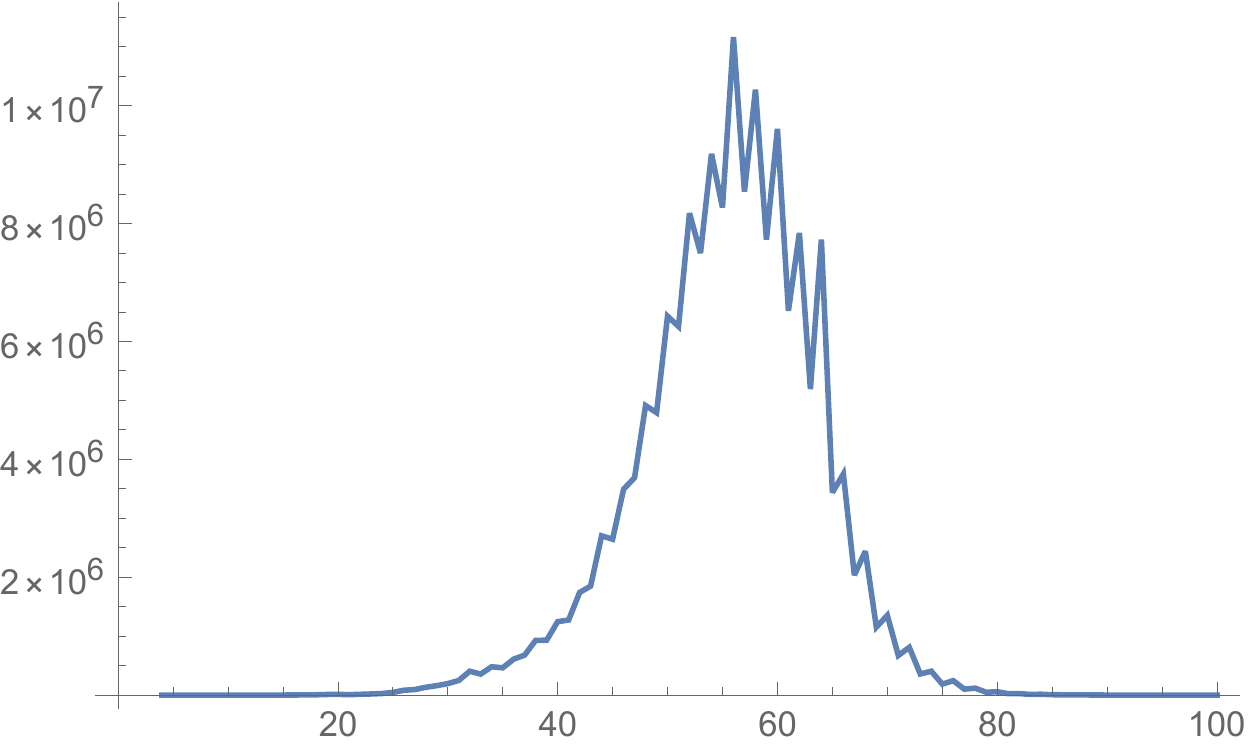}
	\caption{The number of MUCD classes as  function of domains size for $n=7$}\label{fig:si7}
\end{figure}
We can create a natural notion of a random MUCD by giving each isomorphism class equal probability and taking a random member of the chosen isomorphism class. 
The expected size of a MUCD under this distribution is for each small degree lower than $2^{n-1}$ but can be very well fitted to an exponential function.
\begin{conjecture}
	Let $Z_n$ be a random MUCD  then $\log( \mathbb{E}(|Z|))\sim n\log(q)$, for some constant $1<q<2$.
\end{conjecture}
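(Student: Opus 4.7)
The plan is to prove the claim in three stages: (i) a lower bound $\mathbb{E}(|Z_n|) \geq q_1^n$ with $q_1 > 1$; (ii) an upper bound $\mathbb{E}(|Z_n|) \leq q_2^n$ with $q_2 < 2$; and (iii) convergence of $\log \mathbb{E}(|Z_n|)/n$ to a common value $\log q$. Writing $\mathbb{E}(|Z_n|) = S_n/N_n$, where $N_n$ is the number of isomorphism classes of MUCDs of degree $n$ and $S_n = \sum_{[A]} |A|$ is the sum of class sizes (well defined since all representatives of a class have the same size), both bounds translate into bounds on this ratio.

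For the lower bound I would use Fishburn's replacement scheme from Section \ref{defs}, which given $A \in \mathrm{MUCD}(k{+}1)$ and $B \in \mathrm{MUCD}(l)$ produces an MUCD of degree $k+l$ of size $|A|\cdot|B|$. Taking both factors among the Galambos--Reiner alternating schemes, whose sizes grow like $n\,2^{n-3}$, yields MUCDs of exponential size on $n$ points; by varying the split of $n$ and the small building blocks, one obtains many distinct isomorphism classes all of exponential size. A careful count of how many such classes are pairwise non-isomorphic, weighted against the growth of $N_n$, should give $S_n \geq q_1^n\,N_n$ for some $q_1 > 1$; empirically this even appears to hold with $q_1 \approx 1.4$.

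For the upper bound the naive estimate $\mathbb{E}(|Z_n|) \leq F(n) \leq c^n$ from Raz is useless, since the known $c$ exceeds $2$. Instead I would exploit the concentration of the size distribution visible in Figure \ref{fig:si7}: almost all MUCD classes have size very far below $F(n)$, so one needs a quantitative bound on the number of classes of size exceeding $\alpha^n$ for $\alpha$ close to $2$. A natural route is to show that any such large MUCD must contain a spanning combinatorial sub-structure, for instance a long chain in the weak Bruhat order (which by Blin is itself a CD) or many elements sharing a fixed pattern of principal-closed membership, and then bound the number of isomorphism classes carrying such a sub-structure by an entropy or dictionary-style count.

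The main obstacle, and the likely reason the statement is left as a conjecture, is step (iii). The replacement scheme gives multiplicativity $|C| = |A|\cdot|B|$ at the level of individual MUCDs, which at first sight should drive a Fekete-type inequality of the form $\log \mathbb{E}(|Z_{k+l}|) \geq \log \mathbb{E}(|Z_k|) + \log \mathbb{E}(|Z_l|) - o(1)$. However, passing to isomorphism classes introduces uncontrolled multiplicities, the complementary inequality is not obvious at all, and a large proportion of MUCDs of degree $k+l$ do not arise as replacements. As the authors themselves note in Section \ref{galois}, there is presently no usable theoretical handle on the closure operation or on the asymptotic enumeration of MUCDs up to isomorphism, and bridging this gap is where essentially all of the work would lie.
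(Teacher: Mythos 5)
This statement is a \emph{conjecture} in the paper: the authors offer no proof, only a curve fit of $\mathbb{E}(|Z_n|)$ to $0.59163\times 1.91324^n$ over the four data points $n=4,\dots,7$. So there is no argument of theirs to compare yours against, and the relevant question is whether your proposal actually closes the problem. It does not, and you largely say so yourself; but it is worth being precise about where each stage fails. For the lower bound, exhibiting exponentially large MUCDs via the replacement scheme is far from enough: $\mathbb{E}(|Z_n|)=S_n/N_n$, and the denominator $N_n$ (already $\approx 1.7\times 10^8$ at $n=7$) is not understood asymptotically. To get $S_n\geq q_1^n N_n$ you would need either that a typical class is exponentially large, or that the number of large classes you construct is within a factor $(c/q_1)^{-n}$ of $N_n$ itself --- i.e.\ you must produce a number of pairwise non-isomorphic large domains that is comparable to the (unknown) total count. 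The replacement construction gives no such lower bound on multiplicity, and nothing in the paper or the literature controls $N_n$ from above. For the upper bound, the proposed ``spanning sub-structure plus entropy count'' is not a lemma but a hope: no statement of the form ``every MUCD of size $\geq\alpha^n$ contains structure $X$'' is known, and the paper's own data (e.g.\ the non-connected majority, the non-ample and fixing examples) suggest that large MUCDs are structurally quite diverse. For stage (iii), as you note, the isomorphism-class averaging destroys the multiplicativity that would feed a Fekete argument, and even supermultiplicativity of $F(n)$ does not transfer to $\mathbb{E}(|Z_n|)$.

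In short: your write-up is a reasonable research programme, and its candid identification of step (iii) as the bottleneck matches the authors' own admission (Section \ref{galois}) that no theoretical handle on closure or on asymptotic enumeration exists. But none of the three stages is carried out, so this should be presented as a plan of attack on an open conjecture, not as a proof; the statement remains open after your argument exactly as it was before it.
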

Fitting an exponential function to the four, admittedly few, values we have for $\mathbb{E}(|Z|))$ gives a very good fit to $0.59163\times 1.91324^n$.  Fitting the variance  also give a good fit to an exponential growth of 4.663. The third moment is negative and gives a negative skewness which is growing in magnitude for our range of $n$.  With all of this in mind it seems likely that a the size distribution converges after a proper normalisation but it is not clear what the asymptotic form will be.  
\begin{question}
	Let $M_n$ and $\sigma_n$ be the mean and standard deviation of $|Z_n|$ and define $Y_n=\frac{|Z_n|-M_n}{\sigma_n}$. 
	
	Does $Y_n$ converge in distribution as $n\rightarrow \infty$? If so, what is the asymptotic distribution?   
\end{question}

\subsection{The  structure of MUCDs}
The first structural  property which we will look at is whether or not a MUCD can be built from CDs of lower degree.
\begin{definition}
	Given a MUCD $C$  on a base set $A$  we say that $C$ is \emph{reducible} if there exists a proper subset $B\subset A$, of size at least 2, such that the elements of $B$ 
	are consecutive in each of the linear orders in $C$. If $C$ is not reducible we say that it is irreducible.
\end{definition}
The motivation for this definition is that a reducible MUCD  can be built from two CDs, $C_1 $ on a set $A'$ of size $1+|A\setminus B]$   and  $C_2$ on  $B$, using a slight generalisation of Fishburn's replacement scheme. There we pick some element of $A'$  and then replace that element in every member of $C_1$ with a permutation from $C_2$.   In the column labelled Reducible we display the number of MUCDs of each size which are reducible.  Obviously reducibility is strongly affected by the factorisation of the size, since the size of a reducible MUCD is the product of the size of the factor CDs $C_1$ and $C_2$, each of which must be maximal.    Even though the number of reducible MUCDs increase with the degree  we nonetheless expect them to asymptotically be outnumbered by the irreducible ones.
\begin{conjecture}
	MUCDs are asymptotically almost surely irreducible\footnote{That a property holds asymptotically almost surely, abbreviated a.a.s.,  means that as $n$ goes to infinity the proportion of objects with the property goes to 1.}.
\end{conjecture}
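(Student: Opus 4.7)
The plan is to bound the number of reducible isomorphism classes from above and compare with a lower bound for the total. Write $N(n)$ for the number of isomorphism classes of MUCDs of degree $n$, and $R(n)$ for the number of those that are reducible.

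If $C$ is a reducible MUCD of degree $n$ with a consecutive block $B$ of size $k \geq 2$, then by the remark following the definition of reducibility, $C$ arises from a MUCD $C_1$ of degree $n - k + 1$ and a MUCD $C_2$ of degree $k$ via (a slight generalisation of) the replacement scheme. Up to isomorphism, $C$ is determined by (i) the size $k$ of the block, (ii) an isomorphism class for $C_1$, (iii) an isomorphism class for $C_2$, and (iv) a choice of which of the $n - k + 1$ points of the base set of $C_1$ is blown up. Hence
$$R(n) \;\leq\; \sum_{k=2}^{n-1} (n - k + 1)\, N(k)\, N(n - k + 1).$$
Setting $P(n) := \max_{2 \leq k \leq n-1} N(k)\, N(n - k + 1)$, we have $R(n) \leq n^2\, P(n)$, so it suffices to prove $P(n) = o(N(n)/n^2)$.

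For the required lower bound on $N(n)$, I would pursue two complementary constructions and take whichever is quantitatively stronger. First, fix a large MUCD of degree $n$, such as the alternating scheme ${\mathcal A}_n$, and perform local modifications: swap the applied law on some triple for an admissible alternative, then adjust the permutation set on nearby triples to restore maximality. Performed on many disjoint triples in parallel, such modifications should yield a family of pairwise non-isomorphic MUCDs whose cardinality grows super-exponentially in $n$. Second, exploit the numerical data already in hand (for example $N(7) \approx 1.7 \times 10^8$) together with iterated replacement schemes followed by perturbations that destroy the consecutiveness of all blocks, aiming at a bound of the form $N(n) \geq \exp(\Omega(n^2))$.

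The main obstacle is establishing a sufficiently strong lower bound on $N(n)$ relative to $P(n)$. Heuristically the data suggest that $\log N(n)$ is roughly quadratic in $n$, which by convexity would immediately give $N(k)\,N(n-k+1) \ll N(n)$ for every split with $2 \leq k \leq n-1$; but turning this heuristic into a theorem appears to require a genuine new construction of an $\exp(\Theta(n^2))$-sized family of pairwise non-isomorphic MUCDs. A parallel route worth pursuing is a direct probabilistic argument on the space of isomorphism classes of MUCDs: show that for each fixed $B \subseteq X_n$ with $|B| \geq 2$, the probability under the uniform distribution on isomorphism classes that $B$ appears as a consecutive block in every ordering of a random MUCD is $o(2^{-n})$, and conclude by a union bound over the at most $2^n$ admissible subsets $B$. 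Either route reduces the conjecture to a concrete, quantitative question about either the enumerative or the symmetry-breaking behaviour of MUCDs.
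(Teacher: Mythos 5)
This statement is posed in the paper as a \emph{conjecture}, motivated by the enumeration data for $n\leq 7$; the paper offers no proof, so there is nothing to compare your argument against except the claim itself. Judged on its own terms, your proposal is a sensible reduction but not a proof, and the gap is one you yourself flag. The upper bound
$R(n) \leq \sum_{k=2}^{n-1} (n-k+1)\,N(k)\,N(n-k+1)$
is fine as an over-count (it relies on the paper's remark that a reducible MUCD factors through the replacement scheme into two \emph{maximal} CDs, which the paper asserts but also does not prove in detail), but everything then rests on showing $N(n)/\bigl(n^2 \max_k N(k)N(n-k+1)\bigr) \to \infty$. Neither of your two routes delivers this: the ``local modification of the alternating scheme'' construction is described only at the level of intention (swap a law, ``adjust nearby triples to restore maximality''), with no argument that maximality can be restored, that the results are pairwise non-isomorphic, or that the count is super-exponential; and the appeal to the data plus ``iterated replacement schemes followed by perturbations that destroy consecutiveness'' is circular for your purposes, since replacement schemes produce exactly the reducible domains you are trying to show are rare, and the perturbation step is unspecified. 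The convexity heuristic ($\log N(n)$ roughly quadratic) is consistent with the five known values but is precisely the unproven input.

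The final ``probabilistic route'' does not help either: showing that a fixed $B$ is a consecutive block with probability $o(2^{-n})$ under the uniform distribution on isomorphism classes is not a separate technique, it is a restatement of the conjecture in a form that still requires understanding the same ratio $R(n)/N(n)$. In short, you have correctly reduced the conjecture to a quantitative lower bound on $N(n)$ of the shape $\exp(\Omega(n^2))$ (or at least fast enough growth to beat the convolution bound), but that lower bound is the entire difficulty, and it remains open both in your write-up and in the paper.
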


Next we  see that for each degree we find several MUCDs of size $4$. The first such examples were found by Raynaud \cite{ray1981}   and Danilov and Koshevoy \cite{danilov2013} proved  that these exist for all degrees. These domains can be used to construct MUCDs for larger powers of 2 as well  and we may ask for which fixed sizes we can find a MUCD for infinitely many, or all sufficiently large, degrees.
\begin{question} 
	Are there infinitely many degrees for which a MUCD of size 9 exists?  For which sizes $t$ do there exists MUCDs for infinitely many degrees $n$?
\end{question}

We now look at the  set of laws, or never conditions, a MUCD satisfies.    A particularly nice subfamily of the MUCDs are those which satisfy exactly one law on each triple of alternatives. These MUCDs were named \emph{copious} by Slinko \cite{SLINKO2019166}, the name alluding to the fact that a copious CD gives the maximum possible 4 orders when restricted to any triple of alternatives.   In the column labelled Cop  we show the number of copious MUCDs of each size.  For $n\geq 5$ we find examples for CDs which are not copious. For $n=5$ the restriction to a triple either has size 3 or 4. For $n=6$ all MUCDs with size 9 or less  have restrictions of size 2 or 4, thus being even further from being copious.  We also see that for $n\leq 7$  MUCDs which are close to the maximum size are always copious, and that for most of the range of sizes they make up the majority of all MUCDs.  However, in order to be copious the restriction of a MUCD to a subset of the alternatives must be copious as well. That requirement could make copious MUCD less common for lager $n$. 
\begin{question} 
	What is the minimum size of a copious MUCD of degree $n$?   Are asymptotically almost all MUCDs  not copious?
\end{question}

In \cite{KS22} Karpov and Slinko used the term \emph{ample} to denote those CDs which, whenever restricted to two alternatives  give both of the possible orderings for those alternatives and noted that a copious MUCD is ample. They asked if all MUCDs are ample and we can answer this question negatively: 
\begin{observation}
	The smallest non-ample MUCD  has degree 5 and size 12 
\end{observation}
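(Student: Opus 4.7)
The plan is to combine a uniform criterion for ampleness with the exhaustive enumeration of MUCDs from Section~\ref{Data}. First, I would recast the property in checkable form: a MUCD $A\subseteq S_n$ is non-ample if and only if there is an ordered pair $(a,b)$ in $X_n$ such that every $\sigma\in A$ places $a$ before $b$; equivalently, the restriction of $A$ to the two-element set $\{a,b\}$ is a singleton. This yields an $O(n^{2}|A|)$ per-domain test that can be run uniformly across the enumeration.

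Next I would dispose of the small degrees. In degree $3$, every MUCD has four orderings and, by the action of $S_{3}$ on the alternatives, it suffices to examine a single representative, for instance the MUCD satisfying $1$N$1$, namely $\{213,231,312,321\}$: each pair of alternatives restricts to both of its orderings, so the domain is ample. For degree $4$ I would apply the test to each of the $31$ MUCDs in the catalogue, expecting that every restriction to a pair has size $2$. The analogous sweep through the MUCDs of degree $5$ of size less than $12$ completes the lower-bound portion of the observation.

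Finally, to certify existence I would exhibit one explicit MUCD of degree $5$ and size $12$ in which the restriction to some pair $\{a,b\}$ is a singleton, and verify that its set of satisfied never conditions is closed (so that maximality holds) via the Galois correspondence of Section~\ref{galois}. The hard part of the proof is not conceptual: I do not see a clean combinatorial obstruction that would rule out non-ampleness below size $12$ in degree $5$, so the argument leans essentially on the enumeration. The main obstacle is therefore the trustworthiness of the catalogue itself, which the paper addresses by cross-verifying against the independent implementation in the \texttt{CDL} library \cite{zhou2023cdl}.
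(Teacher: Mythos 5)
Your proposal is correct and follows essentially the same route as the paper: the observation rests on the exhaustive enumeration (the Non-amp columns of Tables~\ref{tab45} and~\ref{tab6} are empty below size $12$ in degree $5$ and for all of degree $4$), together with the explicit size-$12$ witness displayed in Figure~\ref{nonamp}. One small correction: the relabelling action of $S_3$ does not identify the laws $x$N$1$, $x$N$2$ and $x$N$3$ with one another, so degree $3$ has three isomorphism classes of MUCDs rather than one, though each is easily checked to be ample, so your conclusion stands.
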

\begin{figure}[h]
\centering
\begin{tabular}{cccccccccccc}
1 & 1 & 1 & 1 & 1 & 1 & 1 & 1 & 4 & 4 & 4 & 4 \\
2 & 2 & 2 & 3 & 3 & 3 & 4 & 4 & 1 & 1 & 2 & 3 \\
3 & 3 & 5 & 2 & 2 & 5 & 2 & 3 & 2 & 3 & 1 & 1 \\
4 & 5 & 3 & 4 & 5 & 2 & 3 & 2 & 3 & 2 & 3 & 2 \\
5 & 4 & 4 & 5 & 4 & 4 & 5 & 5 & 5 & 5 & 5 & 5 \\
\end{tabular}
\caption{The smallest non-ample MUCD}\label{nonamp}
\end{figure}
The number of non-ample  MUCDS of each size is displayed in the column labelled Non-amp. For $n=5$  there are only  3 non-ample MUCDs, but as the degree goes up they become more common.  Note that for degree $n=7$ all MUCDs of size 9 are non-ample.  We also find surprisingly large examples of non-ample MUCDs for $n=6$  with size 40,   and $n=7$ with size 93.
\begin{question} 
	Is the maximum size of a non-ample MUCD $o(F(n))$?
\end{question}

Being non-ample is not the only deviation from what one might at a first glance expect a maximal UCD to look like.   Let us say that a UCD  $C$ is \emph{fixing} if there exists a value from the base set which has the same position in every order in $C$.  It is clear that if we take a Condorcet domain and insert a new alternative at a fixed position in every linear order we will get a new Condorcet domain, of the same size and degree one larger.  One would typically not expect such a CD to be maximal, however it turns out that it is possible to construct MUCDs in this way.   
\begin{observation}
	The smallest fixing MUCD  has degree 5 and size 4. 
\end{observation}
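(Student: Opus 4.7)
The plan is to verify the observation in two parts: exhibit an explicit fixing MUCD of degree 5 and size 4, and show that no fixing MUCD exists for any smaller degree.

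For the construction, I would take a Raynaud-style size-4 MUCD of degree 4 on the alphabet $\{1,2,4,5\}$ (for instance $\{1245,\,5421,\,2514,\,4152\}$) and insert the element 3 at position 3 of every ordering to produce
\[
C=\{12345,\,54321,\,25314,\,41352\}.
\]
This is plainly fixing (3 sits at position 3 throughout) and contains the identity, so it is unitary. To show it is a CD of degree 5 I would walk through the ten triples: the four triples inside $\{1,2,4,5\}$ inherit a never-condition from the underlying Raynaud MUCD, while each of the six triples containing 3 satisfies an explicit law (for instance $1$N$2$ holds on $\{1,2,3\}$, $\{1,2,5\}$ and $\{1,3,5\}$, and so on).

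Maximality is the main technical step. I would classify every potential extension $\tau \notin C$ by the position of 3 in $\tau$. If 3 remains at position 3, then $\tau|_{\{1,2,4,5\}}$ is a new ordering extending the Raynaud MUCD on $\{1,2,4,5\}$; that MUCD is maximal and copious, so adding any new restriction pushes some triple's restriction set to five orderings covering all three positions for every element, breaking every never-condition there. If 3 lies at some position $q\neq 3$, then either $\tau|_{\{1,2,4,5\}}$ is outside the Raynaud MUCD (fails as above) or it agrees with one of the four existing restrictions, in which case $\tau$ arises from some $\sigma\in C$ by relocating 3 to a position $q\in\{1,2,4,5\}$. A finite case analysis of the $4\times 4=16$ resulting candidates shows that each such displacement of 3 adds a new restriction on some triple containing 3 that exhausts all three positions for every element and thereby violates every never-condition.

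For the lower bound I would first note that the six unitary MUCDs of degree 3 are exactly the UCDs characterised by one of the six legal single-triple never-conditions, and a direct inspection of their four orderings shows no element occupies a fixed position. For degree 4, suppose $C$ is a fixing MUCD with value $v$ at position $p$; unitarity forces $v=p$, and restricting to the other three elements gives a CD of degree 3 of size $|C|\le 4$, which combined with Raynaud's lower bound $|C|\ge 4$ forces $|C|=4$ with a MUCD of degree 3 as the restriction. Let $u$ be the element at position $p'=p\pm 1$ of the identity (choosing $p'\in\{1,\ldots,4\}$), and let $\tau$ be the identity with the entries at positions $p$ and $p'$ swapped. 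Swapping two adjacent entries leaves the restriction on every triple not containing both $v$ and $u$ unchanged, and on the two triples containing both, $v$'s combined positions form a two-element subset of $\{1,2,3\}$, so the never-condition $v$N$i$ persists for the missing $i$; hence $C\cup\{\tau\}$ is still a CD, contradicting maximality. The main obstacle in this plan is the sixteen-case maximality verification for the degree-5 example: although each case is elementary, the argument is ad hoc, and a cleaner uniform approach would isolate a single triple whose tight never-condition (for instance $1$N$2$ on $\{1,2,3\}$ or $5$N$2$ on $\{3,4,5\}$) forces 3 back to position 3 in any extension, but pinning this down rigorously is what would take the most care.
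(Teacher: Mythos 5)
The paper does not prove this observation by hand at all: it is read off from the exhaustive computer enumeration of all MUCDs of degree at most $7$, with the degree-$5$ example simply displayed in a figure (and it is exactly your $C=\{12345,\,25314,\,41352,\,54321\}$). So your plan is a genuinely different route --- a self-contained verification that does not depend on the search. Its overall architecture is sound: the reduction of the maximality check to ``restriction to $\{1,2,4,5\}$ leaves the (copious, maximal) size-$4$ domain of degree $4$'' plus the $4\times 4$ relocations of the element $3$ is the right way to tame the $116$ candidate extensions, and the degree-$3$ impossibility (four orders cannot all place one element in one position, since only two orderings of a triple do so) is correct.

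Two steps are not yet watertight. First, the sixteen-case maximality analysis is asserted rather than carried out; as you note yourself, this is the crux, and until each of the sixteen relocated orders is shown to kill some triple containing $3$ the main claim is unproved (I checked a couple and they do fail, but the write-up must do all sixteen or find the uniform argument you allude to). Second, the degree-$4$ impossibility argument has a real gap in the choice of $p'$: writing $p'=p\pm 1$ hides that the sign matters. For $p=2$ and $p'=3$, on the triple $\{v,u,1\}=\{2,3,1\}$ the positions of $v=2$ over $C$ lie in $\{1,2\}$ (an order of $C$ beginning with $4$ puts $2$ first in the triple, the identity puts it second), while $\tau=1324$ puts $2$ third, so the combined positions can be all of $\{1,2,3\}$ and no law $v\mathrm{N}i$ survives; the symmetric failure occurs for $p=3$, $p'=2$. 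The argument is repairable --- take $p'=p-1$ for $p\in\{2,4\}$ and $p'=p+1$ for $p\in\{1,3\}$, which keeps $v$'s position in every triple inside a two-element set --- but as stated the claim is false for one of the two permitted choices. Finally, ``Raynaud's lower bound $|C|\ge 4$'' is a misattribution (Raynaud constructed maximal domains of size $4$; that every maximal CD has size at least $4$ is a separate, if easy, fact), though this step is not actually used by your adjacent-transposition argument and can simply be dropped.
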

\begin{figure}[h]
\centering
\begin{tabular}{cccc}
1 & 2 & 4 & 5 \\
2 & 5 & 1 & 4 \\
3 & 3 & 3 & 3 \\
4 & 1 & 5 & 2 \\
5 & 4 & 2 & 1 \\
\end{tabular}
\caption{A fixing MUCD of order 5 and size 4.}
\end{figure}
For degree 5 there is a unique fixing MUCD, and for degree there are 2, both with size 8. For degree 7, there are 6 with size 4, 3 of size 8, 4 of size 13 and 133 of size 16.
\begin{question}
	How large can a fixing MUCD of degree $n$ be? Is there a characterisation of the MUCDs which have an extension to a fixing MUCD with one more alternative?
\end{question}

\begin{table}
\begin{tabular}{|l|l|l|l|l|l|l|l|l|l|l|}
\hline
Degree &Size  & Total  & Connected  & Normal & \stackanchor{Self-dual}{(Symmetric)}  & Non-ample & Reducible &  Cop\\
 & &  &   &   &    &   && \\
\hline
4 & 4 & 1 &   &  1   & 1\hfill (1) &  & & 1 \\
\hline
4 & 7 & 4  & 2  &  4 &   &  &  & 4 \\
\hline
4 & 8 & 25 &  7  &  16  & 3\hfill (2) &  & 8& 25 \\
\hline
4 & 9 & 1  & 1  & 1  &  1 &   &  & 1\\
\hline
\hline

&&&&&&\\
\hline
5 & 4 & 2  &   &  2   & 2 \hfill(2)  &     &  &  \\
\hline
5 & 8 & 12 &   &  8  & 2\hfill (2) &     & 2 & 12\\
\hline
5 & 11& 28 & 2  &  18 &   &  &  & 26\\
\hline
5 & 12 & 41 & 16 &  32  & 1  &  1 &  & 36\\
\hline
5 & 13 & 52  & 2 & 32   &   &  &  & 44 \\
\hline
5 & 14 & 279 & 26  & 118   & 1    & 1 & 20 & 236\\
\hline
5 & 15 & 212 & 42  & 58 &    & &   & 208\\
\hline
5 & 16 & 573   & 57 & 141  & 7 \hfill(3)   &  1   & 100 & 572\\
\hline
5 & 17 & 106   & 20  & 34  &    &  & &106\\
\hline
5 & 18 & 43   &  6  & 19  &  1 &  & 5 & 43\\
\hline
5 & 19 & 12  & 8  &  6  &   &  & & 12 \\
\hline
5 & 20 & 2  & 2  & 2  &  & & & 2\\
\hline
\end{tabular}
\caption{MUCDs of degree 4 and 5}\label{tab45}
\end{table}

\begin{table}
\begin{tabular}{|l|l|l|l|l|l|l|l|}
\hline
Size  & Total & Connected  &Normal&  \stackanchor{Self-dual}{(Symmetric)} & Non-ample &  Reducible &  Cop\\
   &   &  &  &   &  &   &\\
\hline
 4 & 8  &    &  8  & 8\hfill (8)  &   & &  \\
\hline
 8 & 11  &    &  7 & 7\hfill (7) & 4    &7 & \\
\hline
 9& 26  &   & 18  &   && & \\
\hline
 10 & 46 &   &  28 &  &&  & \\
\hline
 11& 8  &    & 6   &   &  & & \\
\hline
 12& 11  &   &  4  & 1  & 7& & \\
\hline
 13& 106 &   & 38  &    &4&  &90\\
\hline
 14& 80  &    &  32 &   &2 & & 76\\
\hline
 15& 66  &     &  34 &   &2 & & 54\\
\hline
 16& 1036  &  2  &  246  & 8\hfill (6)   &  6    & 62 & 970\\
\hline
 17& 808  & 12  & 244  &    &  & &642\\
\hline
 18& 808  & 14  & 280  &    &16& & 600\\
\hline
 19& 1399  &  76 & 537  &  3   &40& & 1125\\
\hline
 20& 1734 & 144  & 664  &   4  &45& & 1333\\
\hline
 21& 2156  & 124  & 708  &  2    &114& & 1486\\
\hline
 22& 5072 & 100  & 1194  &     &164& 168 & 3876\\
\hline
 23& 4986  & 114  & 1378  &     &  108& &  3372\\
\hline
 24& 8617 &  246  &  1850  &  9  &207&237 & 5964\\
\hline
 25& 9892  & 240  & 1624  &  2   &156& & 7014\\
\hline
 26& 16629 & 491   &  2502  &  5   &164&312 & 11345\\
\hline
 27& 17137 & 739  & 1756  &  3    &138 & & 12269\\
\hline
 28& 32708  & 883   & 3100   &  16   &281& 1604 & 27013\\
\hline
 29& 25453  & 1176 & 1760  &  5   &168 & & 21909\\
\hline
 30& 31310  & 1420  & 2289  &  6    &188&1272 & 28820\\
\hline
 31& 22543 & 1099  &  1381 &  7  &114& & 21159\\
\hline
 32& 38894 & 1022  & 2195   & 46\hfill (6)    &  307 & 3127  &  37885\\
\hline
 33& 12168 & 548   &  821   &  24   & 84 & & 11722\\
\hline
 34& 11554  & 490  &  1075  & 10    &70 & 636 & 11332\\
\hline
 35& 4635  & 332  & 532  &  7  &38 & & 4573\\
\hline
 36& 3720  & 232  & 458   &  22  & 92  & & 3620\\
\hline
 37& 1297  & 144  &  177  & 11  & 8 & & 1283\\
\hline
 38& 1300 & 114 &  284  & 2   &18 & 72 & 1282\\
\hline
 39& 366  & 79  &  70 & 2  & &  & 366\\
\hline
 40& 192 & 35  & 41  & 2  &5 & 8 & 187\\
\hline 
 41& 50   & 22  &  16 &  && &   50\\
\hline
 42& 57   & 31  & 15  & 7 &&  &  57\\
\hline 
 43& 7   & 5  & 2   & 1  &&  & 7\\
\hline
 44&  4  & 4  &  2 &   && & 4\\
\hline
45&  1  & 1  &  1  & 1  && & 1\\
\hline

\end{tabular}
\caption{MUCDs of degree 6 }\label{tab6}
\end{table}

\begin{table}
{\small 
\begin{tabular}{|l|l|l|l|l|l|l|l|l|}
\hline
Size  & Total & Connected  & Normal &  \stackanchor{Self-dual}{(Symmetric)} & Non-ample & Reducible &  Cop \\
\hline
4       &  46  &   & 46   & 46\hfill (46) &    & &    \\
\hline
8       & 44   &   & 44  & 44\hfill (44) &   & 44 &    \\
\hline
9       & 24   &   &   &   & 24 & & \\
\hline
10      & 270   &    & 186  &  &  10 & & \\
\hline
11       & 188   &   & 120  &    & 2 & & \\
\hline
12       & 147   &   &  84 & 1  & 3  & & \\
\hline
13       & 176  &   & 72  &    & 4 & &  \\
\hline
14       & 284  &    & 110  &  4  & 60 & & \\
\hline
15       & 548  &    &  188 &    & 112 & & \\
\hline
16       & 1626  &   &  452  &  26\hfill (24)   &  228  & 60 &   59\\
\hline
17       & 2178   &   &  490 &     &  358 & & 30\\
\hline
18       & 4435   &   & 794  &  1   & 283  & 182 & 14\\
\hline
19       & 9994   &   &  1248 &     & 358 & & 1528\\
\hline
20       & 11864   &    &  1544  & 4    &502  & 322& 1630\\
\hline
21       & 7040   &   &  1274 &     & 702 & & 1338\\
\hline
22       & 12688   & 2 & 2696  &  1   & 923  & 56 & 6536\\
\hline
23       & 18570    & 10 & 3378  &     & 1412 & & 9080\\
\hline
24       &  25954    & 16 & 4204   &   14   & 2708  & 75 & 13046\\
\hline
25       & 44660   & 58  & 6940  &       & 3238 & & 26674\\
\hline
26       & 81579   &  146 &  10266  &  17   & 4343  & 742 & 50698\\
\hline
27       & 94158   & 252 & 11072  &      & 4576 & & 53522\\
\hline
28       & 132114   & 314 & 15864   &   16  & 5828  & 580 & 82028\\
\hline
29       & 159868   & 716 &  19754 &      & 7342 & & 106586\\
\hline
30       & 194674   & 1198 & 24096  &  8  & 10630  & 462 & 127450\\
\hline
31       & 247692   & 1314 & 29790  &      & 12124 & & 163450\\
\hline
32      &  404009  & 1982  &  38995  & 55 \hfill(15)  &  18441  & 7013 & 286060\\
\hline
33       & 356618   & 2822 & 40644  &       & 19194 & & 221900\\
\hline
34       & 480195  & 2706 & 52546  &   5    & 24711  & 5656 & 304066\\
\hline
35       & 461900   & 3452 & 54328  &        &30328  &  & 263654\\
\hline
36       & 609624   & 4342 &  66737  &  14    & 42359  & 5637 & 348343\\
\hline
37       & 678422   & 4484 & 66594  &      & 46796 &  & 374642\\
\hline
38       & 928441   & 5328  & 87391   &    11   & 61830  & 9793 & 536413\\
\hline
39       & 930304   & 5370 & 80096  &     & 60042 & & 503650\\
\hline
40       & 1244522   & 6412 & 102038   &   32  & 75913 & 12104 & 692657\\
\hline
41       &  1273738  & 6612  & 94144  &      & 76780 & & 668724\\
\hline
42       &  1739772  &7870  &  118548  &    14    & 98608  & 15092 & 965212\\
\hline
43       &  1849074  & 8914 & 106920  &       & 93256  & & 1012696\\
\hline
44       &  2701280  & 11736 & 141762   &   16    & 117140 & 35012 & 1583810\\
\hline
45       & 2644266   & 14948  &  118310 &       & 104706 & & 1462180\\
\hline
46       & 3491780   & 16876 &  156358  &  14     & 132576 & 34902 & 1933530\\
\hline
47       &  3686966  & 20004 & 126586  &        & 126006 & & 2060898\\
\hline
48       & 4911214   & 24830 &   167362   &  104  & 167322 & 59465 & 2895896\\
\hline
49       &  4790868  & 27420 &  128900 &       & 150198  & 64 & 2792242\\
\hline
50       &  6426642  & 34916  & 170856  &  12    & 184304  & 69244 & 4022222\\
\hline
51       &  6253444  & 40434 &  125366 &       &168414  & & 3932782 \\
\hline
52       &  8174653  & 47116 &  177956  &  41  & 211246  & 115157 & 5384015\\
\hline
53       & 7497364   & 56266 & 119424  &       & 178664 & &  5055996\\
\hline
\end{tabular}
\caption{MUCDs of degree 7}\label{tab71}
}
\end{table}

\begin{table}
{\small 
\begin{tabular}{|l|l|l|l|l|l|l|l|l|l|}
\hline
Size  & Total  & Connected & Normal &  \stackanchor{Self-dual}{(Symmetric)}  & Non-ample & Reducible&  Cop  \\
\hline
54       & 9180598   & 67628 & 162730   &   26    & 209307 & 119959 & 6487399\\
\hline
55       & 8270608   & 72728 &  108716 &       & 173954 & & 6107600\\
\hline
56       & 11160909   & 87290 & 161446   &   97    & 224096 & 223986 & 8766421\\
\hline
57       & 8540924   & 94064  & 95220  &       & 165036 & & 6742858\\
\hline
58       & 10269782   & 97952 &  134214 &  10     & 204109 & 178171 & 8360889\\
\hline	
59       &  7723932   & 94522 & 83966  &      & 149322 &     & 6345838	\\
\hline
60      & 9606176   & 92548 & 120399  &  52     & 202072 & 214186 & 8260766\\
\hline
61      & 6518148   & 77586 & 68314  &        & 131766 & & 5655112\\
\hline
62      &  7839946  & 69514 & 98479   &  22     & 159649 & 157801 & 6963611\\
\hline
63      & 5191166   & 55636 & 55530  &       & 108204 & 32 &  4642368\\
\hline
64      & 7728718   & 54052 &  85340  &  254\hfill (11)    & 173910   &  260912 & 7162308\\
\hline
65      & 3436076   & 38238 & 42744  &      & 93546 & & 3090834\\
\hline
66      & 3750621   & 34346 &  60993  &  39    & 112105 & 85176 & 3408640\\
\hline
67      &  2034070  & 29028 & 32490  &      & 60780 & & 1836672\\
\hline
68      & 2440206   & 26152 & 49547  &  42    & 97782 & 78262 & 2221040\\
\hline
69      &  1152526  & 20140 &  24736 &      & 47074 & & 1038388\\
\hline
70      & 1351871   & 19750 &  35886  &   11   & 65862 & 32445 & 1228087\\
\hline
71      &  671796  & 14742 &  17262 &      & 26368 & & 616530\\
\hline
72      &  808375  & 12776 & 24520  &   49   & 53157  & 25136 & 732188\\
\hline
73      & 357970  & 9872 &  10936 &      & 21338 & & 323602\\
\hline
74      & 405334   & 7714 & 15495   &  12  & 24711 & 9079 & 370471\\
\hline
75      & 186106   & 6120 & 7374  &     &  9364& & 171590\\
\hline
76      & 244369   & 4848 &  12120 &  7   & 16441 & 8798 & 223662\\
\hline
77      & 101268   & 3966 &  4818 &     & 5286 & & 94074\\
\hline
78      & 116958   & 3086 & 6400  &  2   & 6592 & 2562 & 108916\\
\hline
79      &  48120  & 2456 & 2792  &     & 2274 & & 45170\\
\hline
80      & 56464   & 1816 & 3719  &  2   & 3607 & 1294 & 52459\\
\hline
81      & 23402  & 1720 & 1490  &    & 1396 & 4 & 21864\\
\hline
82      & 25154   & 1208 & 1864  &     & 1506 & 350 & 23480\\
\hline
83      & 11344   & 1146  & 810  &    & 456 & & 10806\\
\hline
84      & 14503   & 938  & 1271  &  7  & 686 & 399 & 13799\\
\hline
85      & 6108   & 1020 & 370  &  &  254& & 5834\\
\hline
86      & 4273  & 552  &  506 & 1   & 222 & 49 & 4049\\
\hline
87      & 2066  & 506  & 226  &   & 96 & & 1970\\
\hline
88      & 2038   & 308  & 220  &   & 46 & 18 & 1992\\
\hline
89      & 1248   & 368 & 106   &   & 12 & & 1236\\
\hline
90      & 647   & 154 &  75 &  1 & 24 & 7 & 623\\
\hline
91      & 214   & 66 & 22  &   & 4 & & 210\\
\hline
92      & 274   & 98  & 46  &   &  & & 274\\
\hline
93      & 106   & 66  &  6 &   & 4 & & 102\\
\hline
94      & 76   & 36  &  10  &   &  & & 76\\
\hline
95      & 18   & 10  &  2 &   &  & & 18\\
\hline
96      & 36   & 30  &  8 &   &  & & 36\\
\hline
97      & 16   & 14 & 4  &   &  & & 16\\
\hline
98      & 4  & 4  &   &   &  & & 4\\
\hline
100    & 2  & 2  &  2 &   &  & & 2\\

\hline
\end{tabular}
\caption{MUCDs of degree 7}\label{tab72}
}
\end{table}

\subsection{Connectivity and Peak-Pit domains}\label{bruhat}
In this section we will consider several properties of  a UCD which are directly connected to the view of a CD as a subset of the permutohedron.

At least since the 1960's it has been common to consider \emph{connected} CDs, ie. a CD which induces a connected subgraph of the permutohedron. One attractive property of 
such domains is that it is possible to move between any two linear orders in the domain in step which only differ by an inversion. This can be interpreted as saying that the set of opinions is in some sense a continuum.     In the column labelled Connected we display the number of connected MUCDs of each size.    Here two things stand out in the data. First, the majority of all MUCDs are not connected. For small sizes, relative to $n$,  this is automatic but as we can see it seems to be the case for most sizes.    Secondly, up to $n=7$ the maximum MUCD is always connected.   We believe that the first of these properties holds more generally:
\begin{conjecture}
	A.a.s.  MUCDs are not connected.
\end{conjecture}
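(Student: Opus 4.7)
The plan is to show that $\#\{\text{connected MUCDs of degree }n\}/\#\{\text{MUCDs of degree }n\} \to 0$ via a two-sided counting argument: an upper bound on connected MUCDs, together with a matching (or larger) lower bound on disconnected ones. The empirical counts in Tables~\ref{tab6} and~\ref{tab71} strongly suggest this phenomenon, with connectedness becoming rarer as $n$ grows and, in particular, with every small MUCD being disconnected.

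First, I would establish a structural rigidity property for connected UCDs near the identity. Because $A$ is unitary and connected, there must be some $i$ with $(i,i+1)\in A$; iterating this, for every $\sigma\in A$ some adjacent transposition of $\sigma$ lies in $A$. This constrains the permitted laws on every triple of the form $\{\sigma(i),\sigma(i+1),\sigma(j)\}$: laws of type $x$N$2$ that would separate $\sigma$ from all its Hasse-neighbours are excluded, and more generally the law signature must admit a spanning tree of adjacent moves. The aim is to parlay this into an estimate $C(n)\le c^{n}$ on the number of connected MUCDs with $c$ strictly smaller than the growth rate of $\mathbb{E}(|Z_n|)$ from the preceding conjecture.

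In parallel, I would produce a large explicit family of disconnected MUCDs. A natural source is the Raynaud-style construction of MUCDs of size~$4$, which are always disconnected (the first row of each degree's table has zero connected entries) and whose count grows rapidly with $n$; these can be fed into Fishburn's replacement scheme to propagate disconnectedness. A refined construction would start with the alternating scheme $\mathcal{A}_n$, which is connected, and perform local perturbations that swap a law on a carefully chosen triple while preserving maximality; many such perturbations provably sever connectedness. If the number of distinct perturbations grows faster than $c^n$, we obtain the desired dominance.

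The main obstacle is the quantitative step: converting the qualitative rigidity of connected MUCDs into a sharp upper bound $C(n)$. The difficulty is that the closure operator on sets of laws, as the authors stress in Section~\ref{galois}, is poorly understood, and it is precisely the closure that controls which local moves extend to maximal domains. An alternative route, more aligned with the data, is to prove that the minimum size of a connected MUCD of degree $n$ grows unboundedly (the tables suggest roughly linearly in $n$), and then to show that the number of MUCDs of smaller size alone already dominates the total count. This reduces the problem to a prolific-construction lower bound on small MUCDs, for instance via Raynaud's scheme and its iterations, which is expected to be more tractable than a direct upper bound on $C(n)$.
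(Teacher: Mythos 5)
This statement is posed in the paper as a \emph{conjecture}: the authors give no proof, only the empirical evidence of Tables~\ref{tab6}, \ref{tab71} and \ref{tab72}, so there is no proof in the paper to compare against. Judged on its own terms, your proposal is a research programme rather than a proof, and it contains gaps that would each be fatal as written. The most serious is that you are comparing against the wrong quantity. You aim for a bound $C(n)\le c^n$ on the number of connected MUCDs with $c$ ``strictly smaller than the growth rate of $\mathbb{E}(|Z_n|)$''. But $\mathbb{E}(|Z_n|)$ is the expected \emph{cardinality} of a random MUCD (conjectured to grow like $q^n$ with $q<2$), whereas the conjecture requires $C(n)/N(n)\to 0$ where $N(n)$ is the total \emph{number} of isomorphism classes of MUCDs --- a vastly larger and entirely different quantity ($N(7)=171870480$, while $\mathbb{E}(|Z_7|)$ is on the order of $10^2$). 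An exponential upper bound on $C(n)$ proves nothing without a superexponential (or at least larger-base) lower bound on $N(n)$, and no such bound is proposed or currently known beyond Fishburn-type constructions, which give sizes rather than counts of domains.

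The lower-bound half is also not viable as described. At degree $7$ there are only $46$ MUCDs of size $4$, against well over a million connected MUCDs in the tables, so the Raynaud size-$4$ family (even propagated through replacement schemes, whose factors must themselves be maximal and hence are again counted by the slowly growing reducible column) cannot dominate $C(n)$. The ``local perturbation of the alternating scheme'' step founders on exactly the closure problem you yourself flag: there is no known criterion guaranteeing that swapping one law yields a set of laws whose closed permutation set is still maximal, let alone disconnected. Your fallback --- show that the minimum size of a connected MUCD grows and that small MUCDs already dominate the count --- is a sensible reformulation, but note that the data indicate the minimum connected size is ${n\choose 2}+1$ (the length of a maximal chain), i.e.\ quadratic in $n$, not ``roughly linear''; and the key step, a superexponential count of MUCDs of size below this threshold, is again unproved. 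The conjecture remains open, and your sketch does not close it.
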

\begin{question}
	Are there always exactly 2 non-isomorphic  connected MUCDs of size ${n \choose 2}+1$?
\end{question}

In \cite{puppe2022maximal} Puppe and Slinko conjectured that a MUCD is connected if and only if it is a \emph{peak-pit domain}.  Peak-pit domains  stem from the early works of Black and Arrow  on single-peaked domains  and are defined as CD which on every triple either satisfy a condition of either the form $x$N1 or $x$N3, for some $x$ in the triple.  We have tested this conjecture on our data.
\begin{observation}
	For degrees $n\leq7$ a MUCD is connected if and only if it is a peak-pit domain.
\end{observation}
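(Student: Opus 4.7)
The plan is to verify the equivalence computationally on the complete enumeration of MUCDs of degree $n \leq 7$ produced by the algorithm of Section~\ref{Alg}. For each MUCD $A$ in the enumerated list, I compute two Boolean predicates independently and then confirm that they agree on every example. Since the statement is restricted to $n\leq 7$ and we already have a full catalogue of MUCDs at our disposal, the natural route is a brute-force cross-check rather than a structural argument.

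To test the peak-pit property, iterate over all $\binom{n}{3}$ triples $\{a,b,c\} \subseteq X_n$. For each triple, form the set of suborderings obtained by restricting the elements of $A$ to $\{a,b,c\}$, and read off which of the nine never-conditions $x$N$i$ are satisfied (as already needed internally by our algorithm). The domain $A$ is peak-pit precisely when, for every triple, at least one satisfied law has $i \in \{1,3\}$, that is, no triple is forced to adopt only an $x$N$2$ (middle) law. To test connectivity, build the graph $G_A$ on vertex set $A$ whose edges join pairs of permutations differing by a single adjacent transposition, equivalently pairs in which one covers the other in the weak Bruhat order on $S_n$. A standard breadth-first search from an arbitrary element of $A$ decides whether $G_A$ is connected.

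The main obstacle is one of scale rather than mathematical subtlety. With roughly $1.7 \times 10^8$ MUCDs at $n=7$, some containing up to $100$ permutations, the checks must be coded using the same bitstring representations of permutations and principal closed sets employed elsewhere in our implementation; in particular, the set of suborderings on a triple can be extracted by a handful of bit operations, and the Bruhat-cover relation between two permutations reduces to a comparison of inversion sets. Fortunately the verification is embarrassingly parallel and fits directly into the framework used for the enumeration. The observation then reduces to searching the output for any MUCD on which the two predicates disagree; the absence of such an example in either direction establishes the claim for all degrees $n\leq 7$ and simultaneously confirms the Puppe--Slinko conjecture in this range.
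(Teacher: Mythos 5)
Your proposal is exactly the paper's approach: the observation is established purely by computational verification over the complete enumeration, checking for each MUCD whether every triple satisfies some $x$N$1$ or $x$N$3$ law and whether the domain induces a connected subgraph of the Hasse diagram of the permutohedron, and confirming the two predicates never disagree. The paper gives no more detail than ``we have tested this conjecture on our data,'' so your explicit description of the two checks (and the correct reduction of the cover relation to adjacent transpositions / inversion-set comparison) matches and slightly elaborates what the authors did.
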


\subsection{Normal, symmetric, and self-dual MUCDS}
Two further classes of often-studied CDs are the \emph{normal}  and the \emph{symmetric} CDs.  The terminology in the literature varies a bit here but we 
will say that a CD is normal if it isomorphic to a CD which contains both the standard order $\alpha$ and the reverse order $u$. This is sometimes instead called normalisable, with normal then meaning that the CD actually contains   $\alpha$ and $u$, and sometimes called being of maximal width.    Being symmetric on the other hand means that for every order $\beta$ in the domain $C$  the reversed order $u\beta$ also belongs to $C$.     In the columns labelled Normal  and (Symmetric) we give the number of normal  and symmetric MUCDs. As we can see, the number of normal MUCDs is substantially smaller than the total  number, and we believe that this patterns will continue.
\begin{conjecture}
	A.a.s.  MUCDs are not normal.
\end{conjecture}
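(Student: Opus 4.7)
The plan is to prove the conjecture via a counting argument that combines an upper bound on the number of normal MUCDs with a lower bound on the total number. First I would reformulate normality as a triple-local constraint on laws. A MUCD $A$ is normal iff its isomorphism class contains a UCD including both the identity and the reverse $u$; equivalently, iff $A$ contains some pair $(g, ug)$ of mutually-reverse orderings. In such a normalised representative, for every triple $\{a,b,c\}$ with $a<b<c$ the never-condition must be satisfied by both the identity (placing $a,b,c$ in positions $1,2,3$) and the reverse (placing them in positions $3,2,1$). Of the six laws compatible with the identity, namely $a$N$2$, $a$N$3$, $b$N$1$, $b$N$3$, $c$N$1$, and $c$N$2$, exactly four survive the extra compatibility with $u$: $a$N$2$, $b$N$1$, $b$N$3$, and $c$N$2$. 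So normality is the local condition that the chosen law on every triple lies in this four-element subset.

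The next step is to bound the number of normal MUCDs from above by adapting the reduced Condorcet tree of Section \ref{Alg} to allow, at each triple, only the four normal-compatible laws. Every isomorphism class of normal MUCDs then has a representative appearing as a leaf of this restricted tree, and a careful branching analysis preserving the implied-law and maximality reductions should yield a bound of the form $c_1^{\binom{n}{3}}$ for some $c_1$ strictly below the analogous branching rate of the unrestricted algorithm. For the complementary lower bound on the total, I would take a small ``obstructing'' MUCD $D_0$ that is verified to be non-normalisable---for instance the degree-$5$ MUCD of size $12$ in Figure \ref{nonamp}, which can be checked by inspection to contain no antipodal pair---and iterate Fishburn's replacement scheme, inserting $D_0$ into many positions within MUCDs of increasing degree. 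Tracking a marked copy of $D_0$ in the output, one argues that any normalising isomorphism of the composite would restrict to a normalisation of $D_0$, a contradiction. If the number of such constructions grows as $c_2^{\binom{n}{3}}$ with $c_2>c_1$, the conjecture follows.

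The hard part will be matching the two bounds rigorously. As the authors observe in Section \ref{galois}, the closure operator on sets of laws is not well understood, and this blocks both a tight upper bound on normal MUCDs and rigorous lower bounds on the total count beyond those given by elementary constructions. A possible workaround is an indicator-style argument avoiding absolute counts: for a fixed triple $t$, show that the proportion of MUCDs whose law on $t$ equals $a$N$3$ or $c$N$1$ stays bounded below as $n \to \infty$, and then combine this over the $\binom{n}{3}$ triples via an approximate independence statement to force at least one such triple to appear asymptotically almost surely. Establishing the uniform local-law density needed for this appears to be the key missing ingredient, and is where I expect most of the technical effort to concentrate.
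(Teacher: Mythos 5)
The statement you are trying to prove is stated in the paper only as a \emph{conjecture}, supported by the enumeration data in Tables \ref{tab45}--\ref{tab72}; the paper offers no proof, so there is no argument of the authors to compare yours against. Judged on its own, your proposal contains one correct and genuinely useful observation: normality of a MUCD is equivalent to containing an antipodal pair $\{g,ug\}$, and after translating $g$ to the identity, every triple $a<b<c$ must satisfy one of only four laws ($a$N$2$, $b$N$1$, $b$N$3$, $c$N$2$) rather than six, since $a$N$3$ and $c$N$1$ are violated by $u$ and $a$N$1$, $b$N$2$, $c$N$3$ by the identity. This is a correct local necessary condition and a reasonable starting point.

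However, the argument as a whole has genuine gaps that you partly acknowledge but that are fatal, not merely technical. First, the upper bound $c_1^{\binom{n}{3}}$ on normal MUCDs and the lower bound $c_2^{\binom{n}{3}}$ on all MUCDs are both unestablished, and there is no evidence that either count grows like $c^{\binom{n}{3}}$ for a fixed $c>1$; the enumeration data ($3, 31, 1362, 256895, 171870480$ for $n=3,\dots,7$) give $\log$-counts whose ratio to $\binom{n}{3}$ is steadily decreasing, so fitting both families to exponentials in $\binom{n}{3}$ and comparing bases is not a workable scheme. The paper's Section \ref{galois} explains exactly why a branching analysis of the (restricted or unrestricted) Condorcet tree is out of reach: the closure operator on sets of laws is not understood, so the effective branching factor after the implied-law and maximality reductions cannot be bounded in the way you need. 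Second, your proposed lower bound via iterated replacement schemes can only ever produce \emph{reducible} MUCDs, and the paper's data (and its separate conjecture that MUCDs are a.a.s.\ irreducible) indicate these are a vanishing fraction of the total; so even if the non-normality of the composite is inherited from the inserted block $D_0$ (which is fine --- the composite contains an antipodal pair iff both factors do), this construction cannot supply a count of non-normal MUCDs comparable to the total. The fallback ``approximate independence of laws across triples'' is precisely the kind of statement that the lack of a theory of closure currently prevents anyone from proving. The conjecture should be regarded as open; your local reformulation of normality is worth keeping, but it does not yield a proof.
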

We also note that for degree $n\leq 7$  the maximum MUCD is always normal. However, in \cite{LMR23} the maximum MUCD of degree 8 was found and it is not normal.  Here one may ask if normality implies a strong restriction on the size of a MUCD.
\begin{question} 
	Is the maximum size of a normal MUCD $o(F(n))$?
\end{question}

The symmetric MUCDs form a subfamily of the \emph{self-dual} MUCDs.
\begin{definition}
	A MUCD  $C$ is self-dual  if the dual MCD   $uC$ is isomorphic to $C$
\end{definition} 
Note that if we demand that the dual is equal, instead of isomorphic, to the original MUCD the  we  get a symmetric MUCD.  The number of self-dual MUCD are given in the column labelled Self-dual. Here we see that  while the number of possible sizes for a self-dual MUCD  is much larger than for the symmetric ones  the total number of self-dual MUCDs is still a small proportion of the total.  However, we also note that for odd $n$  the maximum MUCD are all self-dual  for  $n\leq 7$.
\begin{question}
	Are maximum MUCDs self-dual for odd $n$?  If not, which is the smallest $n$ for which the maximum MUCD is not self-dual?
\end{question}

A second observation is that for odd $n$  we have only seen self-dual MUCDs with even size.
\begin{question}
	Do all self-dual MUCDs have even size if $n$ is odd?
\end{question}

Both normality and being symmetric can be seen as  properties of the intersection between a domain $C$  and the dual domain $uC$. A domain is normal if the intersection is non-empty  and symmetric if the intersection is equal to the entire domain. Note that, since the $\beta$ is never equal to $u\beta$, the intersection will always have even size.  In Tables \ref{int45}, \ref{int6}, \ref{int71}, and \ref{int72}  we give the number of MUCDs of each degree and size with a given size for the intersection.  As one might expect the two most common intersection sizes are 0 and 2. Having intersection size 4 is possible for many sizes and from degree 7 is no longer connected to having  an even domain size, as sizes 49, 63 and 81 show.  Also note that for domains of size 8 the proportion of symmetric domains increases with $n$ and for $n=7$ all MUCDs of size 8 are symmetric.
\begin{question}
	Are all MUCDs of size 8 symmetric  for $n\geq 7$?
\end{question}

Both normality and being symmetric can be seen as  properties of the intersection between a domain $C$  and the dual domain $uC$. A domain is normal if the intersection is non-empty  and symmetric if the intersection is equal to the entire domain. Note that, since the $\beta$ is never equal to $u\beta$, the intersection will always have even size.  In Tables \ref{int45}, \ref{int6}, \ref{int71}, and \ref{int72}  we give the number of MUCDs of each degree and size with a given size for the intersection.  As one might expect the two most common intersection sizes are 0 and 2. Having intersection size 4 is possible for many sizes and from degree 7 is no longer connected to having  an even domain size, as sizes 49, 63 and 81 show.  Also note that for domains of size 8 the proportion of symmetric domains increases with $n$ and for $n=7$ all MUCDs of size 8 are symmetric.
\begin{question}
	Are all MUCDs of size 8 symmetric  for $n\geq 7$?
\end{question}

Note that up to $n=7$ the intersections always have a power of 2 as its size. This is true in general as we will now show. Additionally, in \cite{KS22}  the problem of determining the possible sizes for symmetric MUCDs was raised, after noting that all known constructions give, all, powers of 2 as  size.  This question was in fact implicitly solved already in \cite{danilov2013}  and it also follows from out theorem.
\begin{theorem}
	Let $I$ denote the intersection of a MUCD  $C$ and its dual.  Then the size of $I$ is $2^{k}$, for an integer $k$,  and if $C$ contains the reversed order $u$  then 
	$I$ induces a Boolean sublattice of the weak Bruhat order.
\end{theorem}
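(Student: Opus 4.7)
The plan is to identify $I$ with a Boolean lattice $\{0,1\}^k$ via inversion sets.

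First I would reduce to the case $u \in C$, assuming $I$ is non-empty. Given any $\sigma_0 \in I$, the set $C' = C\sigma_0^{-1}$ is an isomorphic MUCD containing both $\I$ and $u$ (since $u\sigma_0 \in C$), and $|C' \cap uC'| = |I|$; because the Boolean sublattice part is claimed only when $u \in C$, I would then assume $\I, u \in C$ throughout the remainder of the argument.

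Second, I would analyse each triple $t = \{p,q,r\}$ with $p<q<r$ locally. As $pqr, rqp \in C|_t$, the only laws $C$ can impose on $t$ are $q$N$1$, $q$N$3$, $p$N$2$, $r$N$2$. Enumerating the reversal-closed subsets of $C|_t$ containing $\{pqr, rqp\}$ shows that $I|_t$ is exactly one of three types: (A) $\{pqr, rqp\}$, forcing all three pair-inversion statuses on $t$ to coincide; (B) $\{pqr, prq, qrp, rqp\}$, forcing only those of $\{p,q\}$ and $\{p,r\}$ to coincide; or (C) $\{pqr, qpr, rpq, rqp\}$, forcing only those of $\{p,r\}$ and $\{q,r\}$ to coincide. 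The crucial qualitative point is that every such constraint is of the slave form $X = Y$ (never an anti-slave $X \neq Y$), and that these slave equations already exclude the two cyclic Condorcet tournaments on $t$.

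Third, I would let $\sim$ be the equivalence relation on $\binom{[n]}{2}$ generated by these triple-constraints, with classes $E_1, \ldots, E_k$, and define $\phi \colon I \to \{0,1\}^k$ by $\phi(\sigma)_i = 1$ iff $E_i \subseteq Inv(\sigma)$. Well-definedness and injectivity follow immediately from the local analysis. For surjectivity, given $v \in \{0,1\}^k$ I form $S = \bigcup_{v_i = 1} E_i$; since $S|_t$ satisfies the triple-constraint on every $t$ and is therefore non-cyclic, $S$ is biclosed and equals $Inv(\sigma)$ for a unique $\sigma \in S_n$. Because $\sigma|_t \in I|_t \subseteq C|_t$ for every $t$, the permutation $\sigma$ satisfies every law of $C$ and so lies in $C$ by maximality, and reversal-closure of each $I|_t$ similarly yields $u\sigma \in C$; hence $\sigma \in I$. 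This gives $|I| = 2^k$.

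Finally, because the union or intersection of two unions of $\sim$-classes is again such a union, and hence biclosed, the weak-Bruhat join and meet of any two elements of $I$ computed inside $S_n$ coincide with componentwise max and min under $\phi$ and lie in $I$; this upgrades $\phi$ to a sublattice isomorphism onto the rank-$k$ Boolean lattice. The main obstacle I anticipate is the triple-level classification: verifying carefully that every combination of the four admissible laws on a triple yields $I|_t$ of one of the three types and that every resulting constraint is of slave form rather than anti-slave (an anti-slave constraint would destroy the generated equivalence relation) is where the real work lies. Once that local step is settled, the equivalence-class bijection and the sublattice property follow essentially formally.
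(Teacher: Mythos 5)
Your argument is correct, but it takes a genuinely different route from the paper's. The paper proceeds abstractly: it invokes the Danilov--Koshevoy result that a maximal Condorcet domain induces a \emph{distributive} sublattice of the weak Bruhat order, uses the identity $(\sigma\wedge\tau)^\circ=\sigma^\circ\vee\tau^\circ$ to show that $I$, being the largest reversal-closed subset of $C$, is closed under meets and joins, observes that $\beta^\circ$ is a complement of $\beta$ (unique by distributivity), and then cites Birkhoff's theorem that a finite distributive complemented lattice is Boolean. You instead build the Boolean structure by hand from inversion sets: classifying the four admissible laws $q$N$1$, $q$N$3$, $p$N$2$, $r$N$2$ on a triple $p<q<r$ into equality constraints on pair-inversion statuses, generating an equivalence relation on $\binom{[n]}{2}$ with classes $E_1,\dots,E_k$, and exhibiting the bijection $\phi\colon I\to\{0,1\}^k$. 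Your route is self-contained (no appeal to distributivity of all of $C$, nor to Birkhoff) and yields strictly more information, namely an explicit atom set for the Boolean lattice; the price is the case analysis on triples, which you correctly identify as the real work and which does check out. One phrasing needs care: in the surjectivity step the justification should be that $\sigma|_t$ satisfies every law $C$ satisfies on $t$ (all such laws lie among the four listed, and each of your three type-sets is exactly the solution set of one of them), rather than the literal claim that $\sigma|_t\in I|_t$; indeed $I|_t$ can be a \emph{proper} subset of the four-element type set when constraints from other triples merge the ``free'' pair of $t$ into the constrained classes, so ``$I|_t$ is exactly one of three types'' is slightly too strong, though the constraints you extract are valid and are all the argument uses. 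Like the paper, you implicitly need $I\neq\emptyset$, which you flag. Neither point affects correctness.
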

\begin{proof}
		First we note that $I$ is by definition the largest symmetric, meaning equal to its dual, subset of $C$, and we can assume that it contains $\alpha$ and $u$.  Now, 
		as shown in \cite{danilov2013}   $C$  induces a distributive sublattice of the weak Bruhat order.   Taking two elements $\sigma, \tau \in I$   it follows that 
		$(\sigma \wedge \tau)^\circ=\sigma^\circ\vee\tau^\circ$, where the $\circ$ denotes the reversed order $\tau^\circ=u\tau$. That is, the reverse of the meet of any pair of 
		orders in $I$  is the join of their reverses. So if we add the meet of any  two orders from $I$ and the join of their reverses we get a symmetric set.  But since $I$ is the 
		maximum symmetric subset it must be closed under taking meets and joins.
		
		Next let us note that in this lattice the meet and join  of  an order $\beta$ and its reverse $\beta^\circ$ are $\alpha$ and $u$ respectively. This follows since the set of 
		inversions of $\beta^\circ$ is the complement of the set of inversion of $\beta$.   This means that the reverse $\beta^\circ$ satisfies the conditions for being a 
		\emph{complement} of $\beta$ in the lattice-theoretic sense. Since the lattice is distributive it also follows that  $\beta^\circ$ is the unique complement for $\beta$. 

		So our domain $C$  induces a finite, distributive, complemented lattice  and by  e.g. Theorem 16, Chapter 10, in \cite{B48}  all such lattices are isomorphic to a Boolean 
		lattice, and hence have size $2^k$ for some integer $k\geq 0$.
\end{proof}
By our proof the intersection sets $I$ are Condorcet domains which are closed under meets and joins in the Bruhat order, however they are typically not maximal  Condorcet domains.

\begin{table}[hbt!]
\begin{tabular}{|l|l|l|l|l|l|l|l|}
\hline
Degree &Size  &   &&&&\\
  & &  0 & 2 & 4 &  8 & 16  \\
\hline
4 & 4 & & & 1  & &\\
\hline
4 & 7 & & 4  &&&\\
\hline
4 & 8 &  9 & 8  &  6  & 2  &\\
\hline
4 & 9 & & 1 & &&\\
\hline
\hline

&&&&&&\\
\hline
5 & 4 & & & 2  && \\
\hline
5 & 8 &  4 &  6 & & 2 &\\
\hline
5 & 11 &  10 & 18  & &&\\
\hline
5 & 12 &  9 & 32  &&&\\
\hline
5 & 13 &  20  &  32  &&&\\
\hline
5 & 14 &  161 & 98  & 20  &&\\
\hline
5 & 15 &  154  &  58  &&&\\
\hline
5 & 16 &  432  &  78  &  44  & 16  & 3  \\
\hline
5 & 17 &  72  &  34  &&&\\
\hline
5 & 18 &  24  & 14  &  5  & &\\
\hline
5 & 19 &  6  &  6  & &&\\
\hline
5 & 20 & &  2  & &&\\
\hline
\end{tabular}
\caption{Size of the intersection between C and the reverse of C}\label{int45}
\end{table}

\begin{table}
\begin{tabular}{|l|l|l|l|l|l|l|l|}
\hline
Degree &Size  & &&&&&  \\
  & &  0 & 2 & 4 &  8 & 16 & 32 \\
\hline
6 & 4 &   &   & 8 & &&\\
\hline
6 & 8 &  4 &   &  &  7  &&\\
\hline
6 & 9&  8 & 18 & &&&\\
\hline
6 & 10 & 18 & 28 & &&& \\
\hline
6 & 11&  2 &  6  &  &&&\\
\hline
6 & 12&  7 &  4 &  &&&\\
\hline
6 & 13&  68 & 38 &  &&&\\
\hline
6 & 14&  48 &  32 &  &&&\\
\hline
6 & 15&  32 & 34  & &&&\\
\hline
6 & 16&  790 &  202  &  32  & 6 & 6 & \\
\hline
6 & 17&  564 &  244 &  &&&\\
\hline
6 & 18&  528 & 280  &  &&&\\
\hline
6 & 19&  862 &  537  &  &&&\\
\hline
6 & 20&  1070 &  664 &  &&&\\
\hline
6 & 21&  1448 &   708 & &&&\\
\hline
6 & 22&  3878 &  1086  & 108 &&& \\
\hline
6 & 23&  3608 &  1378 & & &&\\
\hline
6 & 24&  6767 &   2166 & 184  &&&\\
\hline
6 & 25&  8268 &   1624 &  &&&\\
\hline
6 & 26&   14127 &   2310 & 192  &&&\\
\hline
6 & 27&  15381 &  1756  & & &&\\
\hline
6 & 28&  29608 &  2416  & 620  & 64 && \\
\hline
6 & 29&  23693 &  1760   & &&&\\
\hline
6 & 30&  29021 &  1941  & 348  &&&\\
\hline
6 & 31&  21162 &  1381 & &&&\\
\hline
6 & 32&  36699 &  1450 & 536 &  163 &  40 & 6  \\
\hline
6 & 33&  11347 &   821 &  &&&\\
\hline
6 & 34&  10479 &  871 & 204 &&& \\
\hline
6 & 35&  4103 &  532 & &&&\\
\hline
6 & 36&  3262 &  350 & 92  & 16 && \\
\hline
6 & 37&  1120 &  177  & &&&\\
\hline
6 & 38&  1016 &  248 & 36  &&&\\
\hline
6 & 39&  296 &   70 &  &&& \\
\hline
6 & 40&  151 &  33 & 8  &&&\\
\hline 
6 & 41&   34 &  16  & &&& \\
\hline
6 & 42&   42 &  15  & &&& \\
\hline
6 & 43&   5 &  2 & &&& \\
\hline
6 & 44&   2 &  2 & &&&\\
\hline
6 & 45&   0 & 1 & &&&\\
\hline
\end{tabular}
\caption{Size of the intersection between C and the reverse of C }\label{int6}
\end{table}

\begin{table}
{\small 
\begin{tabular}{|l|l|l|l|l|l|l|l|l|}
\hline
Size  &  &&&&&& \\
         & 0 & 2 & 4 & 8 & 16 &32 & 64  \\
\hline
4       &     &  & 46 &   &  & &\\
\hline
8       &     &  & & 44 &  & &\\
\hline
9       & 24   &  & &  &  & &\\
\hline
10      & 84   & 186  & &  &  & &\\
\hline
11       & 68   & 120 & &  &  & &\\
\hline
12       & 63   &  84 & &  &  & &\\
\hline
13       &  104  & 72 & &  &  & &\\
\hline
14       & 174   & 110  & &  &  & &\\
\hline
15       &  360  & 188  & &  &  & &\\
\hline
16       & 1174   & 410 & & 18 & 24 &  & \\
\hline
17       & 1688   & 490 & &  &  & &\\
\hline
18       & 3641   & 668 & 126 &  &  & &\\
\hline
19       &  8746  & 1248 & &  &  & &\\
\hline
20       & 10320   & 1348 & 196 &  &  & &\\
\hline
21       & 5766   & 1274 & &  &  & &\\
\hline
22       & 9989   & 2654 & 42 &  &  & &\\
\hline
23       & 15192  & 3378 & &  &  & &\\
\hline
24       & 21750   & 4176 & 28 &  &  & &\\
\hline
25       & 37720   & 6940 & &  &  & &\\
\hline
26       & 71313   & 10000  & 266 &  &  & &\\
\hline
27       & 83086   & 11072 & &  &  & &\\
\hline
28       & 116250   & 15629 & 224 & 20  &  & &\\
\hline
29       & 140114   & 19754 & &  &  & &\\
\hline
30       & 170578   & 23858 & 238 &  &  & &\\
\hline
31       & 217902   & 29790 & &  &  & &\\
\hline
32      &  365014  & 37322  & 1472 & 152 & 34  & 15 &\\
\hline
33       & 315974   & 40644 & &  &  & &\\
\hline
34       & 427649   & 50838 & 1708 &  &  & &\\
\hline
35       & 407572   & 54328 & &  &  & &\\
\hline
36       & 542887   & 64778 & 1954 & 5 &  & &\\
\hline
37       & 611828   & 66594  & &  &  & &\\
\hline
38       & 841050   &  83632 & 3759&  &  & &\\
\hline
39       & 850218   & 80086 & &  &  & &\\
\hline
40       & 1142484   & 97396 & 4642 &  &  & &\\
\hline
41       &  1179594  & 94144  & &  &  & &\\
\hline
42       &  1621224  & 113592 & 4956 &  &  & &\\
\hline
43       &  1742104  & 106970 & &  &  & &\\
\hline
44       &  2559518  & 133512 & 7854 & 396 &  & &\\
\hline
45       &  2525956  & 118310  & &  &  & &\\
\hline
46       &  3335422  & 146712 & 9646 &  &  & &\\
\hline
47       &  3560380  & 126586 & &  &  & &\\
\hline
48       &  4743852  & 154626 & 12064 & 672 &  & &\\
\hline
49       &  4661968  & 128836 & 64&  &  & &\\
\hline
50       &  6255786  & 159488  & 11368 &  &  & &\\
\hline
51       &  6128078  & 125366 & &  &  & &\\
\hline
52       &  7996697  & 160682 & 16570& 704 &  & &\\
\hline
53       &  7377940  & 119424 & &  &  & &\\
\hline
\end{tabular}
\caption{Size of the intersection between C and the reverse of C for degree 7}\label{int71}
}
\end{table}

\begin{table}
{\small 
\begin{tabular}{|l|l|l|l|l|l|l|l|l|}
\hline
Size  &  &&&&&& \\
         & 0 & 2 & 4 & 8 & 16 &32 & 64  \\
\hline
54       &  9017868  & 150438 & 12292 &  &  & &\\
\hline
55       &  8161892  & 108716 & &  &  & &\\
\hline
56       &  10999463  & 140374 & 18444 & 2444 & 184 & &\\
\hline
57       &  8445704  & 95220  & &  &  & &\\
\hline
58       &  10135568  & 121894 & 12320 &  &  & &\\
\hline
59       &  7639966  & 83966 & &  &  & &\\
\hline
60       &  9485777  & 104754 & 14369 & 1276 &  & &\\
\hline
61      & 6449834   & 68314 & &  &  & &\\
\hline
62      & 7741467   & 88812 & 9667 &  &  & &\\
\hline
63      & 5136516   & 55498 & 32 &  &  & &\\
\hline
64      & 7646022   & 71024 & 11378& 2314 & 515 & 98 & 11\\
\hline
65      & 3393332   & 42744 & &  &  & &\\
\hline
66      & 3689628   & 55246 & 5747 &  &  & &\\
\hline
67      & 2001580   & 32490 & &  &  & &\\
\hline
68      & 2391499   & 42232 & 6567& 748 &  & &\\
\hline
69      & 1127790   & 24736 & &  &  & &\\
\hline
70      & 1315985   & 32162 & 3724&  &  & &\\
\hline
71      & 654534   & 17262 & &  &  & &\\
\hline
72      & 783855   & 21414 & 2680 & 380  & 46 & &\\
\hline
73      & 347034   & 10936 & &  &  & &\\
\hline
74      & 389839   & 14256 & 1239&  &  & &\\
\hline
75      & 178732   & 7374 & &  &  & &\\
\hline
76      & 232249   & 10174 & 1814 & 132 &  & &\\
\hline
77      & 96450   & 4818 & &  &  & &\\
\hline
78      & 110558   & 5910 & 490&  &  & &\\
\hline
79      & 45328   & 2792 & &  &  & &\\
\hline
80      & 52745   & 3446 & 245 & 28 &  & &\\
\hline
81      & 21912   & 1486 & 4 &  &  & &\\
\hline
82      & 23290   & 1752  & 112 &  &  & &\\
\hline
83      & 10534   & 810  & &  &  & &\\
\hline
84      & 13232   & 1166  & 105 &  &  & &\\
\hline
85      & 5738   & 370  & &  &  & &\\
\hline
86      & 3767   & 492  & 14 &  &  & &\\
\hline
87      & 1840   & 226  & &  &  & &\\
\hline
88      & 1818   & 212  & 8 &  &  & &\\
\hline
89      & 1142   & 106  & &  &  & &\\
\hline
90      & 572   & 68  & 7 &  &  & &\\
\hline
91      & 192   & 22  & &  &  & &\\
\hline
92      & 228   & 46  & &  &  & &\\
\hline
93      & 100   & 6  & &  &  & &\\
\hline
94      & 66   &10   & &  &  & &\\
\hline
95      & 16   & 2  & &  &  & &\\
\hline
96      & 28   & 8  & &  &  & &\\
\hline
97      & 12   & 4  & &  &  & &\\
\hline
98      & 4   &   & &  &  & &\\
\hline
100    &    & 2  & &  &  & &\\

\hline
\end{tabular}
\caption{Size of the intersection between C and the reverse of C for degree 7}\label{int72}
}
\end{table}

\section{Relation to other domain types}\label{other}
The main motivation for studying Condorcet domains has been to better understand majority voting, as in Condorcet's original work. However, today domains of linear 
orders are studied much more broadly, both in connection with other classical voting  systems  and regarding where well-behaved voting systems or choice rules can be constructed.
The work of Dasgupta and Maskin  \cite{DM08} shows that Condorcet domains are the largest domains where any voting system satisfies a specific list of axioms for a well-behaved voting system.  So, in this broader context Condorcet domains stand out  in this sense,  but many authors focus on weaker axioms and we will here briefly comment on how the Condorcet domains for small $n$  relate to two such lines of investigation.

Recall that a voting systems is \emph{strategy-proof}, or non-manipulable,  if the best option for each voter is to present a ranking which agrees with their actual preferences.  The classical   Gibbard-Satterthwaite theorem \cite{Gi73,Sa75} states that if the domain consists of all, unrestricted, linear orders  then the only strategy-proof deterministic voting system is dictatorial, i.e. the outcome depends only on one voter.  On the other hand, majority voting  on Condorcet domains  is not only strategy-proof but even proof against strategic voting by coalitions of voters, see Lemma 10.3 of \cite{Mou88}.  A number of papers have investigated either how much a domain can be restricted while retaining the conclusion from the Gibbard-Satterthwaite theorem  or how large a domain can be while allowing non-dictatorial  choice functions.  In \cite{ACS03} Aswal, Chatterji and Sen  introduced the \emph{unique seconds property}, abbreviated USP,   and showed that any domain with the USP has a non-trivial strategy-proof choice function.  A domain has the USP  if there exists a pair of alternatives A and B such that whenever A is ranked first in a linear order B is ranked second. The property has turned out to be quite fruitful and recently \cite{CZ23} showed that in a certain well-connected class of domains the USP is in fact equivalent to the existence of non-trivial strategy-proof choice functions.  

Given that we already know that CDs are strongly strategy-proof we may ask how they fit in the wider landscape of strategy-proof domains, and in particular if they have the USP.   It turns out that  among the CDs for small $n$ many do in fact have the USP, but far from all do.  In Table \ref{tabOD} we show the number of CDs with the USP for degree 4 and 5.  Since the USP is not invariant under reversal of orders it can happen that a CD does not have the USP but its dual does, and this is quite common.  Therefore we also show the number of domains such that neither the domain nor its dual has the USP.   These provide examples of strategy-proof domains which are not covered by USP condition for strategy-proofness, and as we find such examples close to the maximum size for CDs of these degrees.  If we simply demand that the domain does not have the USP then one of the two maximum CDs for $n=5$ is also an example\footnote{Data for $n=6, 7$ can be found in the online appendix.}. 

Another line of work, which intertwines with strategy-proofness, concerns generalisations of Black's single-peaked MUCD.  For each $n$ there is up to isomorphism one Black's single-peaked domain, of size $2^{n-1}$. This is a particularly well-behaved MUCD arising from preferences based  on positions on a linear axis, which can be characterised in various ways~\cite{BH11,PUPPE2018}. This MUCD was first generalised by Arrow into what is now known as Arrow's single-peaked domains. These domains are also MUCDs  but unlike Black's version  there are several non-isomorphic examples for each $n$.  Put briefly a MUCD is Arrow's single-peaked if every triple $(i,j,k)$ satisfies a never condition of the form $x$N3, where $x$ is a member of the triple.  In Slinko's study of these domains~\cite{SLINKO2019166} he enumerated them for $n=4, 5$  and from our data we can extend this:
\begin{observation}
	The number of non-isomorphic Arrow's single-peaked MUCDs for $n=4,\ldots,7$ is $2, 6 ,40, 560$.
\end{observation}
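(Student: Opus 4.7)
The plan is to leverage the complete enumeration of MUCDs described in Section \ref{Alg} and Section \ref{Data}, and simply filter the list for the Arrow single-peaked property. Since every MUCD of degree $n\le 7$ is available, the verification is essentially a computational post-processing step run against stored data, together with a sanity check against Slinko's published values for $n=4,5$.

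First, for each degree $n\in\{4,5,6,7\}$ I would iterate over the stored list of isomorphism-class representatives of MUCDs. For each representative $C$, and for each of the $\binom{n}{3}$ triples $\{i,j,k\}\subseteq X_n$, I would compute the set of laws of the form $x$N$i$ that $C$ satisfies on $\{i,j,k\}$, using the same principal-closed-set machinery as the main algorithm. The domain $C$ is Arrow single-peaked precisely when, on every triple, some law of the form $x$N$3$ with $x\in\{i,j,k\}$ appears; discard $C$ otherwise. Counting the survivors gives the four numbers in the observation.

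Two points need some care. First, the Arrow single-peaked property is preserved under isomorphism (relabelling commutes with the restriction to a triple and with the shape $x$N$3$ of the law), so working with one representative per isomorphism class is legitimate and produces exactly the count of non-isomorphic examples. Second, a given MUCD may satisfy more than one law on a triple (the non-copious case); the check is an existential one, so copiousness is irrelevant. As a correctness check I would compare the outputs for $n=4$ and $n=5$ against Slinko's values in~\cite{SLINKO2019166}; agreement there gives independent confirmation of both the enumeration and the filter, from which the new values $40$ and $560$ for $n=6,7$ follow on the same code path.

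The main obstacle is not mathematical but organisational: the enumeration for $n=7$ comprises roughly $1.7\times 10^8$ MUCD representatives, so the filter must be run in a form that scales, ideally in the same parallel harness used to produce the enumeration, and must use the already-validated isomorphism-class reduction (via the CDL library \cite{zhou2023cdl} cross-check) so that no class is counted twice or missed. Once that pipeline is in place the observation is immediate.
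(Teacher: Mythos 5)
Your proposal is correct and matches what the paper actually does: the observation is obtained by filtering the complete enumeration of MUCD isomorphism-class representatives for the property that every triple satisfies some law of the form $x$N$3$, and cross-checking the $n=4,5$ counts against Slinko's values. The isomorphism-invariance point you raise (the position index of a never condition is preserved under relabelling) is the only mathematical content needed, and you handle it correctly.
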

Stepping outside the class of Condorcet domains Demange~\cite{dem82} defined the class of domains which are \emph{single-peaked on a tree}. Here a domain $D$ on $X_n$ is said to be single-peaked on a tree $T$ with $n$ vertices if we can label the vertices in $T$ with the alternatives from $X_n$ so that the restriction of $D$ to the labels of any maximal path in $T$ is a Black's single peaked domain.  These domain are often not CDs  but they have the weaker property of guaranteeing that pairwise majorities selects a single winner, while there may be cycles among lower-ranked alternatives.   For Black's single-peaked domain Moulin~\cite{Mou80} has identified all strategy-proof choice functions and Danilov~\cite{DANILOV1994} extended this to domains which are single-peaked on a tree. In particular these domains always have a strategy-proof choice function and so ties in with the already mentioned works on strategy-proofness.
Recently these domains have also been the focus for development of efficient algorithms, see \cite{peters22} and references therein.

Here it becomes natural to ask how common it is for Condorcet domains  to be single-peaked on a tree and it turns out to be a rare property for MUCDs.   In Table \ref{tabOD} we give both the total number of MUCDs which are single-peaked on a tree and those which are single-peaked on a star\footnote{Data for $n=6, 7$ can be found in the online appendix.}. 

\begin{table}
\begin{tabular}{|l|l|l|l|l|l|l|l|l|l|l|}
\hline
Degree &Size  & Total  & USP  & NUSPD &   & SPT & Star &  \\
 & &  &   &   &    &   && \\
\hline
4 & 4 &  1 &  1 &     &  &  & &  \\
\hline
4 & 7 &  4  &  2   &   &   &  &  &  \\
\hline
4 & 8 & 25  &  16  &  3  & & 3 & 2 &  \\
\hline
4 & 9 &  1  &  1 &   &   &   &  & \\
\hline
\hline

&&&&&&\\
\hline
5 & 4 & 2   &   2 &      &       &     &  &  \\
\hline
5 & 8 &  12 &  10  &     &   &  1   &  1 &   \\
\hline
5 & 11&  28   &  12   &   4  &   &  &  &   \\
\hline
5 & 12 &  41  &  19  &  5    &    &    &  &   \\
\hline
5 & 13 & 52    & 33  &   2   &   &  &  &   \\
\hline
5 & 14 &  279  &   155  &   46    &    & 3   &  1  &    \\
\hline
5 & 15 &  212   &  96   &  44  &    & &   & \\
\hline
5 & 16 &  573   &  380  &  49    &    &  18    &  10   &   \\
\hline
5 & 17 &  106   &  87   &  2   &    &  & &   \\
\hline
5 & 18 &  43   &   31  &   4  &   &  &  & \\
\hline
5 & 19 & 12    &  9  &     &   &  & &  \\
\hline
5 & 20 &  2  &  1  &    &  & & &  \\
\hline
\end{tabular}
\caption{The column NUSPD counts MUCDs such that neither the domain nor its dual has the USP. The column SPT counts MUCDs which are single-peaked for a tree.}\label{tabOD}
\end{table}

\section*{Acknowledgements}
This research was conducted using the resources of High Performance Computing Center North (HPC2N).  We would like to thank Alexander Karpov for useful comments on the first version 
of the manuscript.


\newpage
\section*{Online Appendix}
Here we present additional data for Section 5 of our paper.

\begin{table}
\begin{tabular}{|l|l|l|l|l|l|l|l|}
\hline
Size  & Total &    & USP& NUSPD  &   &  SPT &  Star \\
\hline
 4 & 8  && 8    &    &   &   &  \\
\hline
 8 & 11  && 10   &     &     & 1 & 1 \\
\hline
 9& 26  &&  20   &   && & \\
\hline
 10 & 46 && 40    &  && 1 & \\
\hline
 11& 8  &    & 7    &   &  & & \\
\hline
 12& 11  &   &  8   &  1  &  & & \\
\hline
 13& 106 &   &  67   & 4   &  &  &  \\
\hline
 14& 80  &    &  61   &   &  & &   \\
\hline
 15& 66  &     &  53   &   &  & &   \\
\hline
 16& 1036  &     & 719   & 64      &      &  11  &  7  \\
\hline
 17& 808  &     &  413   &  140  &  & &   \\
\hline
 18& 808  &     &   379    &  128  &  & &    \\
\hline
 19& 1399  &   &   670   &  207    &  & &     \\
\hline
 20& 1734 &      &  839    &  258    &  & &     \\
\hline
 21& 2156  &    &    1118  &   289    &  & &    \\
\hline
 22& 5072 &      & 2561    &  876   &   & 8   &  2 \\
\hline2 
 23& 4986  &    &   2677    & 682    &     & &  \\
\hline
 24& 8617 &     &  4565      & 1386    & &  14   & 3 \\
\hline
 25& 9892  &    &  4804     &  2164    &  & & \\
\hline
 26& 16629 &      &  8823    & 3129     &  &  29 &  9\\
\hline
 27& 17137 &      &  8460     &  3717     &    & &      \\
\hline
 28& 32708  &       &   17428     &  5864     &   &  100   &   30  \\
\hline
 29& 25453  &      &  13241     &  4709    &    & &      \\
\hline
 30& 31310  &       &  17213     & 4752  &   &   44  &  10   \\
\hline
 31& 22543 &       &  12761     &  3498  &   & &     \\
\hline
 32& 38894 &       & 26102       & 3242     &      &288    &    126   \\
\hline
 33& 12168 &       & 8872        &  710   &    & &      \\
\hline
 34& 11554  &      & 8385       &  788  &   &   8  &   6  \\
\hline
 35& 4635  &      &  3429    &  282 &    & &    \\
\hline
 36& 3720  &      &  2698    & 270     &     &10 &  8   \\
\hline
 37& 1297  &      & 897      & 73   &   & &    \\
\hline
 38& 1300 &     &   930    &  90   &  &  1  &    1  \\
\hline
 39& 366  &     & 270    &  8  & &  &    \\
\hline
 40& 192 &     &  147    &  1  &   &   1 &   1  \\
\hline 
 41& 50   &      & 36  & 2 && &     \\
\hline
 42& 57   &     &  36  & 6  &&  &    \\
\hline 
 43& 7   &    &  4   &  1  &&  & \\
\hline
 44&  4  &    &  3  &   && & \\
\hline
45&  1  &    &  1   &    && & \\
\hline

\end{tabular}
\caption{MUCDs of degree 6 }
\end{table}

'
\begin{table}
\begin{tabular}{|l|l|l|l|l|l|l|l|l|}
\hline
Size  & Total &    & USP& NUSPD  &   &  SPT &  Star \\
\hline
4       &  46  &&   46     &      &&    &      \\
\hline
8       & 44   &&   44    &      &&    &      \\
\hline
9       & 24   &&  21     &      &&    &      \\
\hline
10      & 270 &&  204     &      &&   1 &      \\
\hline
11       & 188  &&  158     &      &&    &      \\
\hline
12       & 147   &&   132    &  1    &&   2 &      \\
\hline
13       & 176 &&  152     &      &&    &      \\
\hline
14       & 284  &&   225    &      &&  3  &      \\
\hline
15       & 548  &&  356     &  28    &&    &      \\
\hline
16       & 1626  &&  1147     & 54     &&  8  &   5   \\
\hline
17       & 2178   &&   1715    & 26     &&    &      \\
\hline
18       & 4435   &&  3579     & 18     &&  5  &   3   \\
\hline
19       & 9994   &&  7595     &  204    &&    &      \\
\hline
20       & 11864   && 9416      &  308    && 39   &   3   \\
\hline
21       & 7040   &&  5334     & 268     &&    &      \\
\hline
22       & 12685   &&  7880     &  1046    &&  3  &      \\
\hline
23       & 18570   &&  10502     &  2382    &&    &      \\
\hline
24       & 25954  &&  13337     &  4083    &&  3  &  1    \\
\hline
25       & 44660   && 21440      &  8310    &&    &      \\
\hline
26       & 81579   &&  41554     & 15134     &&  55  &   19   \\
\hline
27       & 94158  &&  48104     &  16490    &&    &      \\
\hline
28       & 132114  &&  70626    &  20418    &&  38  &   24   \\
\hline
29       & 159868  && 86275      &   24034   &&    &      \\
\hline
30       & 194674   &&  109741     & 24592     && 20   &    14  \\
\hline
31       & 247692   &&  135543     & 36242     &&    &      \\
\hline
32      &  404009  &&  232859     &  55893    &&  547 &   294   \\
\hline
33       & 356618   &&  187736     & 58636     &&    &      \\
\hline
34       & 480195   &&  241879     &  84736    && 141   &    80  \\
\hline
35       & 461900   &&  232880     &  77574    &&    &      \\
\hline
36       & 609624   &&  306168     & 104774     && 132   &  41    \\
\hline
37       & 678422   && 332654      & 126548     &&    &      \\
\hline
38       & 928441   &&  468659     &  169820    &&  201  &   48   \\
\hline
39       & 930304   &&  462464     &   177382   &&    &      \\
\hline
40       & 1244522  &&  622926     &  245049    &&  263  &    88  \\
\hline
41       & 1273738  && 635695      &  257562    &&    &      \\
\hline
42       & 1739772  &&  868016     &  357082    &&  423  &   117   \\
\hline
43       &  1849074 &&  891637     &  413856    &&    &      \\
\hline
44       &  2701280 &&  1341710    &  560391    &&  959  &   223   \\
\hline
45       &  2644266 &&   1274294    &  598920    &&    &      \\
\hline
46       &  3491780 && 1717713      & 766557     &&  754  &   196   \\
\hline
47       &  3686966 &&  1743701     &  901630    &&    &      \\
\hline
48       &  4911214 &&  2396695     &  1117169    &&   1547 &   433   \\
\hline
49       &  4790868 &&  2276175     &  1168420    &&    &      \\
\hline
50       &  6426642 &&  3155872     &  1445189    &&   1173 &   314   \\
\hline
51       &  6253444 &&  2975189    &  1527476    &&    &      \\
\hline
52       &  8174653 && 4077249    &  1795796    &&  2935  & 1067     \\
\hline
53       &  7497364 &&  3595819     &  1801242    &&    &      \\
\hline
\end{tabular}
\caption{MUCDs of degree 7}
\end{table}

\begin{table}
\begin{tabular}{|l|l|l|l|l|l|l|l|l|l|}
\hline
Size  & Total &    & USP& NUSPD  &   &  SPT &  Star \\
\hline
54       &  9180598  && 4590919   &  1975335    &&   2018 &    693  \\
\hline
55       &  8270608  &&  4045313     &   1893692   &&    &      \\
\hline
56       &  11160909 &&  5810955     &  2141477    &&  6714  &    2185  \\
\hline
57       &  8540924  && 4321308     &  1759986    &&    &      \\
\hline
58       & 10269782 &&   5566271    &  1761191    &&  2637 &   1345  \\
\hline	
59       & 7723932   &&  4138772     & 1393850      &&    &      \\
\hline
60      & 9606176  && 5482135     &   1410202   &&   5002 &   1953   \\
\hline
61      & 6518148  &&  3653765     &   1029668   &&    &      \\
\hline
62      & 7839946  &&  4675901     &  996210    &&  2821  &  1180   \\
\hline
63      & 5191166  && 3074071      & 697030     &&    &      \\
\hline
64      & 7728718  &&  5147199     &  683048    && 12934   &  5762    \\
\hline
65      & 3436076  &&   2292393   & 315782     &&    &      \\
\hline
66      & 3750621  &&  2596227     &  302273    &&  1238  &   870   \\
\hline
67      & 2034070  &&  1387336     &  177286    &&    &      \\
\hline
68      & 2440206  &&  1737293     &  182585    &&  1891  &   1209   \\
\hline
69      & 1152526  &&   793497   &  100214    &&    &      \\
\hline
70      & 1351871  &&  965314     &  99595    &&  492  &   356   \\
\hline
71      & 671796   &&  465452     & 58322     &&    &      \\
\hline
72      & 808375   &&  584378     &  55400    &&  805  &    523  \\
\hline
73      & 357970  && 248428       &  29754    &&    &      \\
\hline
74      & 405334  && 294744     &  25405    &&  142  &   108   \\
\hline
75      & 186106  &&  131071     &  13680    &&    &      \\
\hline
76      & 244369  &&  180642    &  12648    &&  272  &   180   \\
\hline
77      & 101268  &&  72371     &   6398   &&    &      \\
\hline
78      & 116958  &&  86951   &   5556   &&  34  &  25    \\
\hline
79      & 48120   &&   33469    &  3434    &&    &      \\
\hline
80      & 56464   &&   41434    &   2637   &&  70  &   52   \\
\hline
81      & 23402   && 16423      &   1498   &&    &      \\
\hline
82      & 25154   &&  17915     &   1568   &&   10 &  6    \\
\hline
83      & 11344   &&  7570     &    1058  &&    &      \\
\hline
84      & 14503   &&  10183     &   954   &&  12  &  8    \\
\hline
85      & 6108   &&   4061    &   516   &&    &      \\
\hline
86      & 4273  &&   3094    &   257   &&    &      \\
\hline
87      & 2066  &&  1419    &   164   &&    &      \\
\hline
88      & 2038  &&   1524    &   76   && 2   &   2   \\
\hline
89      & 1248  &&  865     &   98   &&    &      \\
\hline
90      & 647   &&   486    &   24   &&    &      \\
\hline
91      & 214   &&  170     &   14   &&    &      \\
\hline
92      & 274   && 209      &  16    &&    &      \\
\hline
93      & 106   &&   77    &  8    &&    &      \\
\hline
94      & 76   &&  59     &   6   &&    &      \\
\hline
95      & 18   &&    10   &  4    &&    &      \\
\hline
96      & 36   &&  27     &  2    &&    &      \\
\hline
97      & 16   &&  12     &      &&    &      \\
\hline
98      & 4  &&   2    &      &&    &      \\
\hline
100    & 2  &&   1    &      &&    &      \\

\hline
\end{tabular}
\caption{MUCDs of degree 7}
\end{table}

\end{document}